\newcommand{\sM}{\mathsf{M}}
\newtheoremstyle{note}
  {\topsep/2}               
  {\topsep/2}               
  {}                      
  {\parindent}            
  {\itshape}              
  {.}                     
  {5pt plus 1pt minus 1pt}
  {}
\theoremstyle{note}
\newtheorem{theorem}{Theorem}
\newtheorem{lemma}{Lemma}
\newtheorem{corollary}{Corollary}
\newtheorem{proposition}{Proposition}
\theoremstyle{definition}
\theoremstyle{remark}
\newenvironment{proofof}[1]{\vspace*{5mm} \par \noindent
         {\it Proof of #1:\hspace{2mm}}}{\endproof \hfill$\Box$ 
\vspace*{3mm}}
\def\cH{\mathcal{H}}
\newcommand{\mrm}[1]{\mathrm{#1}}
\newcommand{\tr}{\operatorname{tr}}
\newcommand{\diag}{\operatorname{diag}}
\newcommand{\rmc}{\mathrm{c}}
\newcommand{\rmi}{\mathrm{i}}
\newcommand{\rmr}{\mathrm{r}}
\newcommand{\rmD}{\mathrm{D}}
\newcommand{\caH}{\mathcal{H}}
\newcommand{\be}{\begin{equation}}
\newcommand{\ee}{\end{equation}}
\newcommand{\ba}{\begin{align}}
\newcommand{\ea}{\end{align}}
\def\<{\langle}  
\def\>{\rangle}  
\newcommand{\thref}[1]{Theorem~\ref{#1}}
\newcommand{\Thref}[1]{Theorem~\ref{#1}}
\newcommand{\thsref}[1]{Theorems~\ref{#1}}
\newcommand{\cref}[1]{Conjecture~\ref{#1}}
\newcommand{\Cref}[1]{Conjecture~\ref{#1}}
\newcommand{\rcite}[1]{Ref.~\cite{#1}}
\newcommand{\rmf}{\mathrm{F}}
\def\rE{\mathbb{E} }
\def\Pr{\mathbb{P} }
\def\sE{\mathsf{E}}
\def\Label#1{\label{#1}\ [\ \text{#1}\ ]\ }
\def\Label{\label}
\begin{document}

\title{Secure uniform random number extraction via incoherent strategies}

\author{Masahito Hayashi}
\affiliation{Graduate School of Mathematics, Nagoya University, 
Nagoya, 464-8602, Japan}
\affiliation{Centre for Quantum Technologies, National University of Singapore, 
3 Science Drive 2, 117542, Singapore}
\email{masahito@math.nagoya-u.ac.jp}

\author{Huangjun Zhu}
\affiliation{Institute for Theoretical Physics, University of Cologne,
Cologne 50937, Germany}
\email{hzhu1@uni-koeln.de}

\begin{abstract}
To guarantee the security of uniform random numbers generated by a quantum random number generator,
we study secure extraction of uniform random numbers when the environment of a given quantum state is controlled by the third party, the eavesdropper.
Here we restrict our operations to incoherent strategies that are composed of the measurement on the computational basis and  incoherent operations (or incoherence-preserving operations). 
We show that the maximum secure extraction rate is equal to the relative entropy of coherence.
By contrast, the coherence of formation gives the extraction rate when a certain constraint is imposed on  eavesdropper's operations. The condition under which the two extraction rates coincide is  then determined. Furthermore, we find that the exponential decreasing rate of the leaked information is characterized by R\'{e}nyi relative entropies of coherence.
These results clarify the power of incoherent strategies in random number generation, and
can be applied to guarantee the quality of random numbers generated by a quantum random number generator.
\end{abstract}

\date{\today}
\maketitle

Recently, quantum random number generation attracts much attention 
because of many practical  applications, such as cryptography, scientific simulation, and foundational studies  \cite{Ma16,HG17}.
A quantum random number generator is a device for extracting secure uniform random numbers from  quantum states. 
Its experimental demonstration has been done with quantum optics \cite{Jennewein,Furst,Ren,Wei}.
Ideally, the random numbers generated  should be independent of the third party, the eavesdropper (Eve). 
In practice, however, the relevant states or
random numbers  are often correlated to Eve.
For this reason, it is crucial to extract secure uniform random numbers from random numbers whose side information is leaked to Eve.  This task is called secure uniform random number extraction,  which has been studied in the framework of information  security,
and has been considered as a basic tool for quantum key distribution \cite{Renner,marco,RW}.
 Here the goal of  the legitimate user, Alice, is to generate random numbers that are almost independent of Eve.
Usually, the initial state is taken to be a classical-quantum (C-Q) state $\rho_{AE}$, in which, Alice's information is given as a classical random number,
while Eve's information is given as a quantum state that is correlated  to Alice's random variable. When the $n$-tensor product state $\rho_{AE}^{\otimes n}$ is given, 
it is known that the asymptotic secure extraction rate is equal to the conditional entropy
$H(A|E)_{\rho_{AE}}:=S(\rho_{AE})-S(\rho_E)$.

\begin{figure}
	\includegraphics[width=6cm]{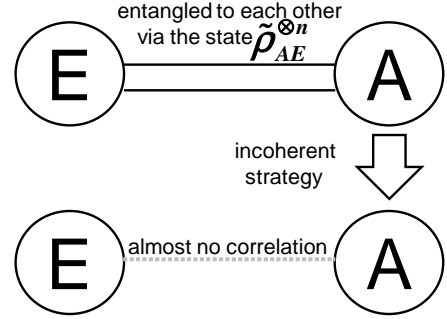}
	\caption{Extraction of secure uniform random umbers via incoherent strategy.}
\Label{FT}
\end{figure}

To guarantee the quality of the random numbers generated in a quantum random number generator, it is usually  assumed that the environment of Alice's system is controlled by Eve. This convention covers the most powerful Eve and is typical in similar research areas.
For example, in the study of quantum key distribution \cite{SP,marco,HQKD} and private capacity \cite{LAZG}, all of the environment is assumed to be under Eve's control.
Further, since Alice generates the quantum state on her system, which is under her control,
it is natural to treat Alice's initial information as a quantum state in the same way as Eve's state.
In fact, there are several  formulations of secure uniform random number extraction
with quantum-quantum states \cite[Section 4.3]{Berta}\cite{Berta2,BFW13,Faw12}.
However, little is known about the optimal extraction rate.
This is because it is not easy to clarify the reasonable range of allowed operations.

To extract  random numbers from a quantum state, Alice can  apply a projective measurement. However, not all quantum states can produce secure random numbers in this way given that the environment is controlled by Eve. Quantum coherence with respect to the measurement basis is crucial to realizing the independence from the environment and the randomness of the outcome simultaneously \cite{Ma16,HG17}.
In addition, in many practical scenarios, it is not easy to create or increase coherence in  quantum systems \cite{MarvS16,StreAP16}. Understanding the limit of random number generation in such practical scenarios is thus of paramount interest not only to theoretical study, but also to real applications. 
Although coherence is indispensable in many applications, such as laser and quantum metrology, the resource theory of  coherence was not established until recently \cite{Aber06,BCP,WintY16,StreAP16,Chitambar16PRL, HuHPZ17}. Under this framework, Yuan et al. \cite{YZCM,YZCM2} showed that the amount of randomness upon measurement on the computational basis is closely related to several important coherence measures, such as the relative entropy of coherence and coherence of formation. 
The relation between our paper and  \cite{YZCM} is explained in more detail in Appendix \ref{Relation}. However, the extraction of uniform random numbers under general  incoherent operations has not been discussed.

Motivated by the problem mentioned above,
in this paper we study the secure  extraction of uniform random numbers under  incoherent strategies, which
include the measurement on the computational basis and  general incoherent operations (or incoherence-preserving operations) \cite{Aber06,BCP,WintY16,StreAP16}. To guarantee the security of the random numbers generated, we 
assume that the environment of the relevant quantum state is controlled by  Eve; see Fig.~\ref{FT}.
We  show that the maximum secure extraction rate 
is equal to the relative entropy of coherence. By contrast, the extraction rate coincides with the coherence of formation if Eve's operations are  constrained in a special way. 
The condition under which the extraction rates in the two scenarios coincide has a simple description. Furthermore, we show that the exponential decreasing rate of the  leaked information is characterized by R\'{e}nyi relative entropies of coherence.
These results not only clarify the power of incoherent strategies in extracting random numbers, but also endow operational meanings to a number of important coherence measures.

The resource theory of coherence is characterized by the set of incoherent states, denoted by $\mathcal{I}$, and the set of incoherent operations \cite{Aber06,BCP,WintY16,StreAP16}.
Recall that a state is incoherent if it is diagonal with respect to the reference computational basis. A quantum operation, represented by a completely positive trace preserving (CPTP) map, is incoherence-preserving (also called maximally incoherent) if it maps  incoherent states to incoherent states \cite{Aber06}. It is incoherent if, in addition, each Kraus operator in  its Kraus representation maps incoherent states to incoherent states up to normalization \cite{BCP}. An incoherent operation is physically incoherent if it admits an incoherent Stinespring dilation \cite{Chitambar16PRL}. For unitary transformations, the three types of operations coincide.  
A unitary operator is  incoherent if and only if  (iff) each row and each column has only one nonzero entry.

The relative entropy of coherence $C_{\rmr}(\rho) $ of a quantum state $\rho$ is the minimum relative entropy between the state and any incoherent state \cite{BCP,Aber06},
\begin{align}\label{eq:REC}
C_{\rmr}(\rho):=
\min_{\sigma \in {\cal I}} S(\rho\|\sigma)=S(\rho^{\diag} ) -S(\rho),
\end{align}
where $S(\rho\|\sigma):= \tr \rho (\log\rho -\log \sigma)$ is the relative entropy between $\rho$ and $\sigma$, $S(\rho)$ is the von Neumann entropy of $\rho$, and $\rho^{\diag}$ is the  diagonal part of $\rho$. In this paper "log" has base 2.
The coherence of formation $C_{\rmf}(\rho) $ is the convex roof of $C_{\rmr}(\rho)$ \cite{Aber06,YZCM},
 \begin{align}
C_{\rmf}(\rho)
:=\min_{ \{p_j, |\psi_j\rangle \}}
\sum_j p_j 
C_{\rmr}( |\psi_j\rangle \langle |\psi_j|),
\end{align}
where $\{p_j, |\psi_j\rangle\}$ satisfies  $ \rho=\sum_j p_j  |\psi_j\rangle \langle \psi_j|$. 
It is known that the relative entropy of coherence $C_{\rmr}(\rho)$ is equal to the distillable coherence, and the coherence of formation $C_{\rmf}(\rho) $ is equal to the coherence cost \cite{WintY16}. 


In practice, Alice repeatedly  generates many copies of identical and independent quantum states.
This assumption allows us to write the state of the whole
system as a tensor product, so  our problem can be formulated as follows.
Suppose Alice holds $n$ copies of the quantum state $\tilde{\rho}_A$ on
system $\mathcal{H}_A$  whose environment is controlled by Eve. 
All the information of Eve about Alice's systems is encoded in a purification, say 
$\tilde{\rho}^{\otimes n}$, of $\tilde{\rho}_A^{\otimes n}$.
Alice  is  allowed  to perform only incoherent strategies, which can be divided into three steps without loss of generality.
First, she applies an incoherent unitary operation  $U_{\rmi,n}$ on the system and an ancilla system $\cH_B$, whose initial state is $|0\rangle$.
Second, she performs the measurement $\sM_{\rmc,n}$ on the computational basis, whose set of outcomes is denoted by ${\cal A}^n$.
Finally, as  post-measurement processing, she applies a random hash function $F_n$ from 
${\cal A}^n $ to a suitable set ${\cal L}_n$.
The cardinality (number of elements) of  ${\cal L}_n$ is denoted by $|{\cal L}_n|$, which also expresses the dimension of the output system.
In this paper, a random variable is denoted by an italic capital letter, and 
its probability space by the  same letter in mathcal font. 
The incoherent strategy of Alice is characterized by the triple $(U_{\rmi,n}, \sM_{\rmc,n}, F_n )$ 
and is denoted by $\sM_{F_n}$ for simplicity.
The cardinality $|{\cal L}_n|$ is also denoted by $|\sM_{F_n}|$.

To determine the maximum extraction rate of secure uniform random numbers, we need a security measure.
When the whole system  is characterized by a  C-Q state ${\rho}_{AE}$,
a widely accepted  measure on secure random numbers is 
the trace norm (also known as the Schatten  1-norm) between the real state and the ideal state,
\begin{align}
d_1({\rho}_{AE}):=\|
{\rho}_{AE}-\tau_{|{\cal A}|} \otimes {\rho}_{E}
\|_1,
\end{align}
where $\tau_{V}$ is the completely mixed state on the $V$-dimensional system.
So, $\tau_{|{\cal A}|}$ expresses the completely mixed state on $\cH_A$. The significance of this measure lies in  the fact that
it is universally composable  \cite{Renner,RW}.

Here the security measure of concern is the value
$d_1(\sM_{F_n}|F_n):=
\rE_{F_n} d_1(\sM_{F_n} (\tilde{\rho}^{\otimes n})) $,
where $\rE_{F_n}$ expresses the expectation with respect to the choice of the random hash function.
The maximum asymptotic extraction rate of secure uniform random numbers 
$R(\tilde{\rho}_A)$ from the  $n$-tensor product 
$\tilde{\rho}_A^{\otimes n}$ is defined as
\begin{align}
R(\tilde{\rho}_A):=
\max_{
\{\sM_{F_n}\}
}
\Big\{ 
\liminf_{n\to \infty}
\frac{\log |\sM_{F_n}|}{n} \Big|
d_1(\sM_{F_n}|F_n) \to 0
\Big\},
\end{align}
where the maximum is taken over  sequences of incoherent strategies $\sM_{F_n}$ which satisfy the given condition.

To compute the rate $R(\tilde{\rho}_A)$, 
we need to study the uncertainty of Alice's system from Eve's viewpoint
when the initial state on $\cH_A\otimes \cH_E$ is a pure state.
This uncertainty can be measured by the conditional entropy 
$H(A|E)_{\tilde{\rho}}=S(\tilde{\rho})-S(\tilde{\rho}_E)$.
To maximize Eve's uncertainty, Alice can introduce an ancilla system $\mathcal{H}_B$ prepared in the incoherent state $|0\rangle \langle 0|$, so that the initial state is $ \tilde{\rho}\otimes |0\rangle \langle 0|$. 
Then  she applies an incoherent unitary $U_\rmi$ on $\cH_A \otimes \cH_B$, which leads to
 the output state $\tilde{\rho}[U_\rmi]:= U_\rmi (\tilde{\rho}\otimes |0\rangle \langle 0|) U_\rmi^\dagger $. When $d_B:=\dim(\caH_B)\geq d_A:=\dim(\caH_A)$, a particularly interesting incoherent unitary is the  the generalized CNOT gate defined as 
 \begin{equation}
 U_{\mrm{CNOT}}:= \sum_{x;\; y<d_A}| x,x+y \rangle \langle x,y| +\sum_{x;\; y\geq d_A}| x,y \rangle \langle x,y|,
 \end{equation}
 where the addition $x+y$ is modulo $d_A$. 
\begin{theorem}\Label{TH1}
	\begin{align}
	\frac{1}{n}\max_{U_\rmi} H(A|E)_{ \tilde{\rho}^{\otimes n}[U_\rmi]}
=&\max_{U_\rmi} 
H(A|E)_{ \tilde{\rho}[U_\rmi]}\nonumber \\
=& 
H(A|E)_{ \tilde{\rho}[U_{\mrm{CNOT}}]}
	= C_{\rmr}(\tilde{\rho}_A)\label{eq:H266},
	\end{align}
	where $U_\rmi$ is an incoherent unitary.
\end{theorem}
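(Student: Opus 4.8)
The plan is to reduce the entire statement to one identity: that $H(A|E)_{\tilde\rho[U_\rmi]}$ --- the conditional entropy of the classical--quantum state obtained from $\tilde\rho[U_\rmi]$ by measuring Alice's system in the computational basis --- is completely determined by the reduced state $\tilde\rho_A[U_\rmi]:=\Tr_E\tilde\rho[U_\rmi]=U_\rmi(\tilde\rho_A\otimes\ketbra{0}{0})U_\rmi^\dagger$ and equals its relative entropy of coherence $C_\rmr(\tilde\rho_A[U_\rmi])$. Once this is established, the ``maximization'' turns out to be trivial.

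First I would record an elementary fact. Let $\ket{\Phi}$ be pure on $\cH\otimes\cH_E$ with reduction $\Phi:=\Tr_E\ketbra{\Phi}{\Phi}$. Measuring $\cH$ in the computational basis $\{\ket{j}\}$ yields the classical--quantum state $\rho_{AE}=\sum_j\ketbra{j}{j}\otimes\ketbra{\phi_j}{\phi_j}$ with the (subnormalized) vectors $\ket{\phi_j}:=\braket{j}{\Phi}\in\cH_E$. Because each conditional block $\ketbra{\phi_j}{\phi_j}$ has rank at most one, $S(\rho_{AE})$ equals the Shannon entropy of the outcome distribution $p_j=\langle\phi_j|\phi_j\rangle=\bra{j}\Phi\ket{j}$, that is $S(\rho_{AE})=S(\Phi^{\diag})$; meanwhile $\rho_E=\sum_j\ketbra{\phi_j}{\phi_j}=\Phi_E$ has $S(\Phi_E)=S(\Phi)$ by purity. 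Hence $H(A|E)_{\rho_{AE}}=S(\Phi^{\diag})-S(\Phi)=C_\rmr(\Phi)$. Applying this with $\cH=\cH_A^{\otimes n}\otimes\cH_B$ and $\ket{\Phi}=\tilde\rho^{\otimes n}[U_\rmi]$, which is pure on $\cH\otimes\cH_E^{\otimes n}$ since $\tilde\rho^{\otimes n}$ purifies $\tilde\rho_A^{\otimes n}$ and $U_\rmi$ acts only within $\cH$, gives $H(A|E)_{\tilde\rho^{\otimes n}[U_\rmi]}=C_\rmr\bigl(U_\rmi(\tilde\rho_A^{\otimes n}\otimes\ketbra{0}{0})U_\rmi^\dagger\bigr)$, and the same with $n$ replaced by $1$.

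Next I would evaluate this relative entropy of coherence using two standard properties. (i) $C_\rmr$ is invariant under incoherent unitaries: an incoherent unitary has exactly one nonzero entry in each row and column, so its inverse is again incoherent, and from $C_\rmr(\rho)=\min_{\sigma\in\mathcal{I}}S(\rho\|\sigma)$ one gets $C_\rmr(V\rho V^\dagger)=C_\rmr(\rho)$ whenever $V$ and $V^\dagger$ both preserve $\mathcal{I}$. (ii) $C_\rmr$ is additive on tensor products and vanishes on $\ketbra{0}{0}$, so $C_\rmr(\tilde\rho_A^{\otimes n}\otimes\ketbra{0}{0})=C_\rmr(\tilde\rho_A^{\otimes n})=n\,C_\rmr(\tilde\rho_A)$. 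Combining, $H(A|E)_{\tilde\rho^{\otimes n}[U_\rmi]}=n\,C_\rmr(\tilde\rho_A)$ for \emph{every} incoherent unitary $U_\rmi$ and every $n$; since the value does not depend on $U_\rmi$, the maximum over incoherent unitaries is $n\,C_\rmr(\tilde\rho_A)$, which gives the first equality of the theorem after dividing by $n$ and the second equality on taking $n=1$ (with $U_{\mrm{CNOT}}$ being one admissible incoherent unitary). For the third equality I would compute directly: since $0<d_A\le d_B$ we have $U_{\mrm{CNOT}}\ket{x}_A\ket{0}_B=\ket{x,x}$, so $\tilde\rho_A[U_{\mrm{CNOT}}]=\sum_{x,x'}(\tilde\rho_A)_{xx'}\ketbra{x,x}{x',x'}$; this is isometric to $\tilde\rho_A$ under $\ket{x}\mapsto\ket{x,x}$ and hence has entropy $S(\tilde\rho_A)$, while its diagonal part $\sum_x(\tilde\rho_A)_{xx}\ketbra{x,x}{x,x}$ has entropy $S(\tilde\rho_A^{\diag})$; thus $H(A|E)_{\tilde\rho[U_{\mrm{CNOT}}]}=C_\rmr(\tilde\rho_A[U_{\mrm{CNOT}}])=S(\tilde\rho_A^{\diag})-S(\tilde\rho_A)=C_\rmr(\tilde\rho_A)$.

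The only step demanding genuine care is the fact recorded in the second paragraph: one must exploit that $\cH_E$ is a \emph{purifying} system, so that every post-measurement conditional state on Eve's side is pure and the conditional entropy collapses to the relative entropy of coherence of Alice's reduced state. Everything afterwards is the routine invariance/additivity calculus of $C_\rmr$ plus the one-line evaluation of the generalized CNOT; in particular there is no real optimization to carry out, because no incoherent unitary ever alters the value.
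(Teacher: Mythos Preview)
Your argument rests on a misreading of the quantity $H(A|E)_{\tilde\rho[U_\rmi]}$. In the paper this is the \emph{quantum} conditional entropy $S(\tilde\rho[U_\rmi]_{AE})-S(\tilde\rho[U_\rmi]_E)$ of the tripartite pure state $\tilde\rho[U_\rmi]$ on $\cH_A\otimes\cH_B\otimes\cH_E$, with ``$A$'' denoting the single system $\cH_A$; no measurement is performed, and the ancilla $\cH_B$ is traced out of the $AE$ marginal rather than grouped with $A$. You instead compute the conditional entropy of the classical--quantum state obtained by measuring all of $\cH_A\otimes\cH_B$ in the computational basis, which is a different object. The discrepancy is visible already for $U_\rmi=I$: then $\tilde\rho[I]_{AE}=\tilde\rho$ is pure, so the paper's $H(A|E)_{\tilde\rho[I]}=-S(\tilde\rho_A)$, which is strictly negative whenever $\tilde\rho_A$ is mixed, whereas your computation gives $C_\rmr(\tilde\rho_A)\ge0$. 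Hence your central claim that ``no incoherent unitary ever alters the value'' is false for the quantity actually appearing in the theorem, and the maximization over $U_\rmi$ is genuinely nontrivial.

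The paper handles this nontrivial step via the duality $H(A|E)_\rho+H(A|B)_\rho=0$ for pure tripartite $\rho$, together with the bound on coherent information $-H(A|B)_{\tilde\rho[U_\rmi]}\le E_\rmr(\tilde\rho[U_\rmi]_{AB})\le C_\rmr(\tilde\rho[U_\rmi]_{AB})\le C_\rmr(\tilde\rho_A)$, with equality throughout when $U_\rmi=U_{\mrm{CNOT}}$ because the resulting $AB$ state is maximally correlated. Your ``elementary fact'' is correct and does surface in the paper (it is precisely the identity $H(A|E)_{\sM_\rmc(\tilde\rho)}=C_\rmr(\tilde\rho_A)$), but it controls the post-measurement quantity $H(A|E)_{\sM_\rmc(\tilde\rho[U_\rmi])}$, not the pre-measurement $H(A|E)_{\tilde\rho[U_\rmi]}$ that the theorem is about.
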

\begin{proof}
Let $U_\rmi$ be any incoherent unitary. Then 	
\begin{align}
	 H(A|E)_{\tilde{\rho}[U_\rmi]} &=-H(A|B)_{\tilde{\rho}[U_\rmi]} 
	\le  E_\rmr(\tilde{\rho}[U_\rmi]_{AB})
	\nonumber\\
	&\leq {C}_\rmr(\tilde{\rho}[U_\rmi]_{AB})\leq {C}_\rmr(\tilde{\rho}_A).
	\Label{eq:H271}
	\end{align}
Here the equality follows from the duality relation $H(A|E)_{\rho}+H(A|B)_{\rho}=0$, which holds whenever $\rho$ is pure; the first inequality follows from \cite{PlenVP00}\cite[Lemma~4]{ZHC},  the second inequality from the fact that incoherent states for a bipartite system are separable, and the third inequality from the fact that the relative entropy of coherence is monotonic under  incoherence-preserving operations. 

According to the duality relation and  \cite{StreSDB15}\cite[Theorem~1]{ZHC}, 
$H(A|E)_{\tilde{\rho}[U_{\mrm{CNOT}}]}
=-H(A|B)_{\tilde{\rho}[U_{\mrm{CNOT}}]} ={C}_\rmr(\tilde{\rho}_A)$, note   that $\tilde{\rho}[U_{\mrm{CNOT}}]_{AB}$ is a maximally correlated state \cite{Rain99,ZhuMCF17,ZhuHC17}.  Therefore, $\max_{U_\rmi} H(A|E)_{ \tilde{\rho}[U_\rmi]} \nonumber 
	= C_{\rmr}(\tilde{\rho}_A)$. Now the proof of \eqref{eq:H266} is completed by the additivity relation  $C_{\rmr}(\tilde{\rho}_A^{\otimes n})= nC_{\rmr}(\tilde{\rho}_A)$ \cite{WintY16,ZHC}.
\end{proof}

\Thref{TH1} is helpful for computing the extraction rate
$R(\tilde{\rho}_A)$ as follows.
If Alice performs the measurement $\sM_{\rmc}$ in the computational basis,
then $\tilde{\rho}$ is turned into the state
 $\sM_{\rmc}(\tilde{\rho}):=
\sum_{x} |x\rangle \langle x |\otimes
\langle x| \tilde{\rho}|x\rangle$, which satisfies
\begin{align}\label{eq:McCNOT}
H(A|E)_{\sM_\rmc(\tilde{\rho})}=H(A|E)_{ \tilde{\rho}[U_{\mrm{CNOT}}]}=C_{\rmr}(\tilde{\rho}_A).
\end{align}
After repeating this procedure and generating the state 
$\sM_\rmc(\tilde{\rho})^{\otimes n}$, Alice applies a random hash function $F_n$ to the $n$ measurement outcomes with 
the  extraction rate of uniform random numbers chosen to be $R$. Here the random hash function $F_n$ is assumed to satisfy the {\it universal 2 condition} as discussed  in Appendix \ref{A2A}
 \cite{Carter,WC81}, 
which is  conventional in generating secure random numbers from 
random numbers that might be partially leaked to the eavesdropper.
The  efficient construction of such hash functions was discussed in \cite{HT2}.
In the independent and identical situation, 
Proposition~\ref{LH3} in Appendix~\ref{A2A} shows that the extracted random numbers are secure  when the extraction rate $R$ is smaller than the conditional entropy 
$H(A|E)_{\sM_\rmc(\tilde{\rho})}$. 
Therefore, we have
$R(\tilde{\rho}_A) \ge 
H(A|E)_{\sM_\rmc(\tilde{\rho})}$. 
Since Alice can optimize the incoherent unitary
before the measurement $\sM_\rmc$,
it follows  that 
\begin{align}
R(\tilde{\rho}_A) \ge  
\max_{U_{\rmi}} H(A|E)_{\tilde{\rho}[U_\rmi]} \Label{r2}.
\end{align}

Conversely, as shown in Appendix \ref{AB1}, the opposite inequality
\begin{align}
R(\tilde{\rho}_A) \le 
\liminf_{n \to \infty}
\frac{1}{n}
\max_{U_{\rmi}}
H(A|E)_{\tilde{\rho}^{\otimes n}[U_{\rmi}] } 
\Label{r1}
\end{align}
holds.
Combining \eqref{eq:H266} of Theorem \ref{TH1} with \eqref{r2} and \eqref{r1},
we obtain the following theorem.
\begin{theorem}\Label{TH3}
The extraction rate $R(\tilde{\rho}_A)$ is given by
\begin{align}
R(\tilde{\rho}_A)=
\max_{U_{\rmi}} H(A|E)_{ \tilde{\rho}[U_\rmi]} 
= C_{\rmr}(\tilde{\rho}_A)
\Label{eq:H26T},
\end{align}
where $U_{\rmi}$ is an incoherent unitary.
\end{theorem}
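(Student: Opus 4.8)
The plan is to establish Theorem~\ref{TH3} by squeezing the rate $R(\tilde{\rho}_A)$ between a lower bound and an upper bound that both equal $C_{\rmr}(\tilde{\rho}_A)$, relying on Theorem~\ref{TH1} to identify the middle quantity. The lower bound is already sketched in the excerpt: Alice measures $\tilde\rho$ in the computational basis, producing the C-Q state $\sM_{\rmc}(\tilde\rho)$, whose conditional entropy equals $H(A|E)_{\tilde{\rho}[U_{\mrm{CNOT}}]}$ by \eqref{eq:McCNOT}; applying a universal-2 random hash function at rate $R<H(A|E)_{\sM_\rmc(\tilde\rho)}$ yields $d_1(\sM_{F_n}|F_n)\to 0$ by the standard leftover-hash (privacy amplification) estimate, which is Proposition~\ref{LH3}. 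Optimizing over the incoherent unitary $U_\rmi$ applied before the measurement gives \eqref{r2}. So the substantive work is the converse inequality \eqref{r1}.

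For the converse, I would argue that any incoherent strategy $\sM_{F_n}=(U_{\rmi,n},\sM_{\rmc,n},F_n)$ achieving $d_1(\sM_{F_n}|F_n)\to0$ cannot have rate exceeding $\frac1n\max_{U_\rmi}H(A|E)_{\tilde\rho^{\otimes n}[U_\rmi]}$. The key point is that the incoherent unitary $U_{\rmi,n}$ acts on $\cH_A^{\otimes n}\otimes\cH_B$, and the purifying system held by Eve is $\cH_E^{\otimes n}$; after $U_{\rmi,n}$ the global pure state is $\tilde\rho^{\otimes n}[U_{\rmi,n}]$, and measuring in the computational basis produces a C-Q state whose $A$-$E$ conditional entropy is at most $H(A|E)$ of the pre-measurement pure state restricted to $AE$ (measurement and further classical hashing cannot increase conditional entropy of $A$ given $E$; equivalently, Eve's guessing power only improves). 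Then I would invoke the converse part of the classical-quantum secure-extraction result: if secure uniform randomness of size $|\sM_{F_n}|$ is extractable from a C-Q state with conditional entropy $H$, then $\limsup \frac1n\log|\sM_{F_n}| \le \frac1n H$ (a one-shot converse via the smooth conditional min-entropy together with the asymptotic equipartition property, or the strong-converse-type bound $d_1 \ge 1 - 2^{-\frac12(H-\log|\sM_{F_n}|)}$ forcing $\log|\sM_{F_n}|\le H+o(n)$). Taking the supremum over all incoherent unitaries $U_{\rmi,n}$ and then over all achieving sequences gives \eqref{r1}; this is the content deferred to Appendix~\ref{AB1}.

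Finally I would assemble the pieces: \eqref{r2} gives $R(\tilde\rho_A)\ge \max_{U_\rmi}H(A|E)_{\tilde\rho[U_\rmi]}$, \eqref{r1} gives $R(\tilde\rho_A)\le \liminf_n\frac1n\max_{U_\rmi}H(A|E)_{\tilde\rho^{\otimes n}[U_\rmi]}$, and Theorem~\ref{TH1} says both of these bounding expressions equal $C_{\rmr}(\tilde\rho_A)$ — in particular the single-copy maximum already attains the regularized value, thanks to additivity of $C_{\rmr}$ and the optimality of $U_{\mrm{CNOT}}$. Hence all inequalities are equalities and $R(\tilde\rho_A)=\max_{U_\rmi}H(A|E)_{\tilde\rho[U_\rmi]}=C_{\rmr}(\tilde\rho_A)$, which is \eqref{eq:H26T}.

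The main obstacle is the converse \eqref{r1}: one must carefully check that restricting Alice to \emph{incoherent} strategies does not let her do anything beyond what is captured by "apply some incoherent unitary, then measure, then hash," and that no cleverer adaptive or coherent-across-copies maneuver evades the bound — this is exactly why the paper reduces a general incoherent strategy to the canonical three-step form $(U_{\rmi,n},\sM_{\rmc,n},F_n)$ at the outset. Granting that structural reduction, the remaining estimate is the standard converse for privacy amplification against quantum side information, so the real care is in the modeling step rather than in any hard new inequality.
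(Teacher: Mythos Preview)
Your overall architecture---lower bound via \eqref{r2}, upper bound via \eqref{r1}, then Theorem~\ref{TH1} to identify both with $C_{\rmr}(\tilde\rho_A)$---is exactly the paper's, and your treatment of the achievability side and of the final assembly is correct.

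The converse sketch, however, contains a wrong step. You claim that ``measuring in the computational basis produces a C-Q state whose $A$--$E$ conditional entropy is at most $H(A|E)$ of the pre-measurement pure state,'' justified by ``Eve's guessing power only improves.'' This is false: computational-basis measurement on Alice's register is a local dephasing on $A$; it leaves $\rho_E$ unchanged while (weakly) increasing $S(\rho_{AE})$, so $H(A|E)$ goes \emph{up}, not down. Concretely, for the pure state $\sigma:=\tilde\rho^{\otimes n}[U_{\rmi,n}]$ on $ABE$ one has $H(A|E)_\sigma=-H(A|B)_\sigma\le C_{\rmr}(\sigma_{AB})$, whereas the post-measurement value is exactly $H(A|E)_{\sM_{\rmc,n}(\sigma)}=S(\sigma_{AB}^{\diag})-S(\sigma_{AB})=C_{\rmr}(\sigma_{AB})$, which dominates the pre-measurement value. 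So the inequality you assert points the wrong way, and the chain breaks precisely at the step you flagged as routine.

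The paper's fix (Appendix~\ref{AB1}) does not try to pass below the measurement. After the data-processing bound for the hash $F_n$ (which you have), it uses the CNOT identity underlying \eqref{eq:McCNOT}: the post-measurement $AE$-marginal coincides with the $AE$-marginal of $U_{\mrm{CNOT}}\,\sigma\,U_{\mrm{CNOT}}^\dagger$, where $U_{\mrm{CNOT}}$ copies Alice's register onto a fresh ancilla. Since $U_{\mrm{CNOT}}U_{\rmi,n}$ is again an incoherent unitary, one gets
\[
H(A|E)_{\sM_{\rmc,n}(\sigma)}=H(A|E)_{\tilde\rho^{\otimes n}[U_{\mrm{CNOT}}U_{\rmi,n}]}\le \max_{U_\rmi}H(A|E)_{\tilde\rho^{\otimes n}[U_\rmi]},
\]
which is \eqref{5-2-1}. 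In other words, the measurement is \emph{absorbed into} the maximization over incoherent unitaries rather than bounded past it. (Equivalently: $H(A|E)_{\sM_{\rmc,n}(\sigma)}=C_{\rmr}(\sigma_{AB})\le C_{\rmr}(\tilde\rho_A^{\otimes n})$ by monotonicity of $C_{\rmr}$ under incoherent unitaries.) For the passage from $d_1(\sM_{F_n}|F_n)\to0$ to a conditional-entropy statement, the paper uses Fannes continuity; your one-shot converse formulation would also work there, but note that the smooth-min-entropy/AEP route would need care since the state after $U_{\rmi,n}$ is not i.i.d.
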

According to \eqref{eq:McCNOT}, \eqref{r2}, and \eqref{r1}, the maximum extraction rate $C_{\rmr}(\tilde{\rho}_A)$ stated in \thref{TH3} can be achieved by the measurement $\sM_\rmc$ on the computational basis (without other incoherent operations) followed by classical data processing characterized by $F_n$. This strategy is denoted by $\sM_{F_n}^*$ henceforth.

\begin{figure}
	\includegraphics[width=6cm]{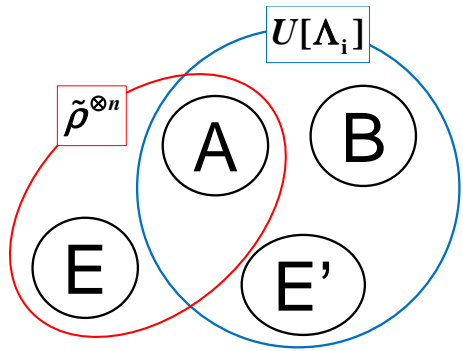}
	\caption{Extended strategy.
Alice can apply a general incoherent (or incoherence-preserving) operation $\Lambda_{\rmi}$.
Both $\cH_E$ and $\cH_{E'}$ are in Eve's hands.}
\Label{F2}
\end{figure}

Now, we extend Alice's incoherent unitaries to general  incoherence-preserving CPTP maps acting on the system $\cH_A \otimes \cH_B$.
If she uses a CPTP map whose final state is always the specific incoherent state 
$\sum_{i=0}^{d-1}\frac{1}{d}|i \rangle \langle i| $,
the resulting conditional entropy equals $\log d$, which increases unlimitedly as $d$ increases.
To avoid such a trivial advantage for Alice,
similar to the study of 
quantum key distribution \cite{SP,marco,HQKD} and private capacity \cite{LAZG},
we assume that the environment $\cH_{E'}$ of the incoherence-preserving CPTP map $\Lambda_{\rmi}$ is also controlled by Eve, so that
 Eve has the two systems $\cH_E$ and $\cH_{E'}$ in total.
This is because it is not easy to exclude the possibility that Eve accesses a  system that interacts with  Alice's operation.
To cover such a worst scenario, we take this convention and consider the Stinespring representation 
$\rho_{E'}[\Lambda_{\rmi}],U[\Lambda_{\rmi}]$ of $\Lambda_{\rmi}$, where 
$\rho_{E'}[\Lambda_{\rmi}]$ is the initial  pure state on the environment
and $U[\Lambda_{\rmi}]$ is the  unitary on the whole system. Note that $U[\Lambda_{\rmi}]$ may not be incoherent if $\Lambda_\rmi$ is not physically incoherent \cite{Chitambar16PRL}, but this fact does not affect the following argument.
Now the total output state is 
$\tilde{\rho}[\Lambda_{\rmi}]
:=U[\Lambda_{\rmi}] (\tilde{\rho} \otimes |0\rangle \langle 0|\otimes \rho_{E'}[\Lambda_{\rmi}]
) U[\Lambda_{\rmi}]^\dagger $.
Since $\tilde{\rho}[\Lambda_{\rmi}]$ is a pure state, 
we can still use the duality relation on conditional entropies.
So, similar to \eqref{eq:H271}, we have
\begin{align}
 H(A|E)_{\tilde{\rho}[\Lambda_{\rmi}]}
\le E_\rmr( \tilde{\rho}[\Lambda_{\rmi}]_{AB})\leq C_{\rmr}(\tilde{\rho}_A).\Label{eq:H263B}
\end{align}
Again, the two inequalities are saturated when $\Lambda_{\rmi}$ is the generalized CNOT gate. 
Therefore, \thsref{TH1} and \ref {TH3} still hold if incoherent unitaries are replaced by general  incoherence-preserving operations.



Here, we need to discuss the relation with the distillable coherence $C_\rmD(\tilde{\rho}_A)$ under incoherent operations, which is equal to $C_\rmr(\tilde{\rho}_A)$ \cite{WintY16}.
Note that coherence distillation may require incoherent operations  across  many copies, and these operations may not be
physically incoherent. 
By contrast, to implement our optimal protocol, it suffices to perform the measurement $\sM_{\rmc}$ 
followed by classical data processing, i.e., application of  universal 2 hash functions, which is much easier.

Now, we remember that the criterion $d_1$ universally covers the distinguishability by Eve's local measurement $\sM_E$.
Since the criterion $d_1$ is universally composable, 
the above discussion covers the case in which Eve chooses her local measurement $\sM_E$
according to the choice of the hash function $f_n$.
Now, we consider the scenario in which 
Eve cannot choose her local measurement $\sM_E$
according to the random choice of the hash function $F_n$, although she knows which hash function $F_n$ is applied after her measurement $\sM_E$.
Here $f_n$ denotes a specific hash function, while $F_n$ denotes a random hash function.
Given the $n$ tensor product state $\tilde{\rho}^{\otimes n}$, 
we introduce a new security criterion 
$\underline{d}_1( \sM_{F_n}|F_n)$ as
\begin{align}
&\underline{d}_1(\sM_{F_n}|F_n)
:=
\max_{\sM_E} \rE_{F_n} d_1\bigl(\sM_{F_n}(\sM_E (\tilde{\rho})^{\otimes n})\bigr) \nonumber \\
=&
\max_{\sM_E} \rE_{F_n} d_1\bigl(\sM_{F_n}(\sM_E^{\otimes n} (\tilde{\rho}^{\otimes n}))\bigr) 
\le 
 d_1(\sM_{F_n}|F_n)
\Label{e1-1},
\end{align}
where $\sM_E$ is Eve's POVM on the  system $\cH_E$.
Then, instead of $R(\tilde{\rho}_A)$, we define
\begin{align}
&\overline{R}(\tilde{\rho}_A):=
\max_{\{\sM_{F_n}\}}
 \left\{ \left.
\liminf_{n\to \infty}
\frac{\log |\sM_{F_n}|}{n} \right|
\underline{d}_1(\sM_{F_n}|F_n)
\to 0
\right\},\Label{29-11}
\end{align}
where the maximum is taken over
 sequences of  incoherent strategies $\sM_{F_n}$ which satisfy the given condition.
The relation \eqref{e1-1} implies the inequality $\overline{R}(\tilde{\rho}_A)\ge {R}(\tilde{\rho}_A)$.
Instead of Theorem \ref{TH3}, we have the following theorem:
\begin{theorem}\Label{TH7}
\begin{align}
\overline{R}(\tilde{\rho}_A)=
C_{\rmf}(\tilde{\rho}_A)
\Label{eq:H26T3}.
\end{align}
\end{theorem}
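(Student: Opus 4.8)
The plan is to establish the two bounds $\overline{R}(\tilde\rho_A)\ge C_\rmf(\tilde\rho_A)$ and $\overline{R}(\tilde\rho_A)\le C_\rmf(\tilde\rho_A)$. The key point is that the criterion $\underline{d}_1$ forces Eve to commit to a non-adaptive POVM $\sM_E$, which destroys her quantum side information and turns the task into classical randomness extraction against classical side information; the convex-roof nature of $C_\rmf$ then enters through the structure of Eve's measurements. The structural fact I would record first is the steering correspondence: since $\tilde\rho$ is pure on $\cH_A\otimes\cH_E$, a rank-one POVM element on $\cH_E$ with outcome $y$ collapses $\cH_A$ onto a pure state $|\psi_y\rangle$, the resulting ensemble $\{P(y),|\psi_y\rangle\}$ runs over precisely the pure-state decompositions of $\tilde\rho_A$ as $\sM_E$ ranges over rank-one POVMs (Hughston--Jozsa--Wootters), and a computational-basis measurement on $|\psi_y\rangle$ returns an outcome of Shannon entropy $C_\rmr(|\psi_y\rangle\langle\psi_y|)$, because for a pure state $C_\rmr$ equals the entropy of its diagonal.

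For the lower bound I would analyse the strategy $\sM_{F_n}^*$ of \Thref{TH3} --- the bare computational-basis measurement $\sM_\rmc$ followed by a universal 2 hash function $F_n$ --- against an arbitrary fixed $\sM_E$. After $\sM_E^{\otimes n}$ and $\sM_\rmc^{\otimes n}$ one has an i.i.d. classical-classical source whose single-letter conditional entropy is $\sum_y P(y)\,C_\rmr(|\psi_y\rangle\langle\psi_y|)\ge C_\rmf(\tilde\rho_A)$; a non--rank-one POVM only lowers the conditional entropy (refine it to rank one and use data processing), so $H(A|E)\ge C_\rmf(\tilde\rho_A)$ for every $\sM_E$, with equality attained on a finite decomposition by Carath\'eodory. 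Hence for any rate $R<C_\rmf(\tilde\rho_A)$ the slack $H(A|E)-R\ge C_\rmf(\tilde\rho_A)-R>0$ is uniform in $\sM_E$, and the leftover-hash estimate of Proposition~\ref{LH3} --- whose exponential error bound is controlled by a R\'enyi conditional entropy continuous in the measured distribution --- makes $\rE_{F_n}d_1\to0$ uniformly over the compact set of POVMs with a bounded number of outcomes. Taking $\max_{\sM_E}$ gives $\underline{d}_1(\sM_{F_n}^*|F_n)\to0$, hence $\overline{R}(\tilde\rho_A)\ge C_\rmf(\tilde\rho_A)$.

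For the upper bound, fix any incoherent strategy $(U_{\rmi,n},\sM_{\rmc,n},F_n)$, write $Z_n$ for the outcome of $\sM_{\rmc,n}$ and $L_n=F_n(Z_n)$, and let Eve apply i.i.d. copies of the POVM $\sM_E^*$ steering $\tilde\rho_A$ to a $C_\rmf$-optimal decomposition $\{p_j^*,|\psi_j^*\rangle\}$. Conditioned on Eve's string $\mathbf{y}$, Alice holds the pure product state $\bigotimes_i|\psi_{y_i}^*\rangle$ on $\cH_A^{\otimes n}$; since the outcome of $\sM_{\rmc,n}$ is a coarse-graining of the computational-basis measurement of the full pre-measurement pure state on $\cH_A^{\otimes n}\otimes\cH_B$, $H(Z_n\mid\mathbf{Y}=\mathbf{y})$ is at most $C_\rmr$ of that state, which by invariance of $C_\rmr$ under incoherent unitaries and its additivity on pure product states equals $\sum_i C_\rmr(|\psi_{y_i}^*\rangle\langle\psi_{y_i}^*|)$; averaging over $\mathbf{y}$ gives $H(Z_n\mid\mathbf{Y})\le n\,C_\rmf(\tilde\rho_A)$. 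As $F_n$ is independent of $(Z_n,\mathbf{Y})$, data processing gives $H(L_n\mid\mathbf{Y},F_n)\le H(Z_n\mid\mathbf{Y})$, while $\underline{d}_1(\sM_{F_n}|F_n)\to0$ forces $L_n$ to be nearly uniform given $(\mathbf{Y},F_n)$; a Fannes-type inequality then yields $\log|\sM_{F_n}|\le(1+o(1))\,H(Z_n\mid\mathbf{Y})+O(1)$, so $\overline{R}(\tilde\rho_A)\le C_\rmf(\tilde\rho_A)$. Extending to incoherence-preserving $\Lambda_{\rmi}$ whose environment is held by Eve proceeds as for \eqref{eq:H263B}, with monotonicity of $C_\rmf$ replacing that of $C_\rmr$ and the steering correspondence applied also to $\cH_{E'}$. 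Combined with the lower bound, this proves \eqref{eq:H26T3}.

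The step I expect to be the main obstacle is the uniformity in the achievability: driving $\max_{\sM_E}\rE_{F_n}d_1$ --- an extremum over a continuum of Eve's measurements --- to zero simultaneously requires either a finite-block error bound for universal 2 hashing that is explicitly continuous in Eve's measured distribution (so that compactness of the POVM set closes the argument), or a genuinely single-shot bound uniform over all pure-state decompositions of $\tilde\rho_A$. A secondary delicate point is the converse's implicit claim that general, possibly entangling, incoherent operations across the $n$ copies confer no advantage; this is exactly where additivity of $C_\rmr$ on pure product states and its monotonicity under incoherent (and incoherence-preserving) operations carry the argument, paralleling the single-letterisation already used in \eqref{eq:H271} and \eqref{eq:H263B}.
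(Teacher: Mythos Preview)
Your proposal is correct and follows essentially the same approach as the paper: achievability via the bare strategy $\sM_{F_n}^*$ together with the identification $\min_{\sM_E} H(A|E)_{\sM_\rmc(\sM_E(\tilde\rho))}=C_\rmf(\tilde\rho_A)$ through the HJW/steering correspondence, and the converse by fixing Eve at a $C_\rmf$-optimal measurement, applying Fannes, and single-letterising via the argument of \eqref{5-2-1} and Theorem~\ref{TH1} (your explicit conditioning on $\mathbf{y}$ and use of invariance/additivity of $C_\rmr$ on pure product states is exactly that argument unpacked). The uniformity-in-$\sM_E$ issue you flag in the achievability is real and is equally glossed over in the paper's own proof; your compactness/continuity sketch is the standard way to close it.
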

This theorem offers an operational meaning of the coherence of formation $C_{\rmf}(\tilde{\rho}_A)$.
Since the relation $C_{\rmf}(\tilde{\rho}_A) \ge C_{\rmr}(\tilde{\rho}_A) $ holds in general and the inequality is generically strict, Theorems \ref{TH3} and \ref{TH7} show that Alice can usually extract secure uniform random numbers with a higher  rate
if Eve chooses her measurement independently of
the incoherent strategies of Alice.
In conjunction with Theorem~10 in \cite{WintY16}, we can  deduce the condition under which the rates in the two scenarios coincide.
\begin{theorem}\label{thm:twoRates}The inequality 
$\overline{R}(\tilde{\rho}_A)\geq R(\tilde{\rho}_A)$ is saturated iff $\tilde{\rho}_A$ is pure or its eigenvectors are supported on  orthogonal subspaces
spanned by a partition of basis states in the reference basis. 
\end{theorem}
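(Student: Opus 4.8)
The plan is to reduce the claimed equivalence to a structural fact about coherence measures and then carry out the resulting two directions. By \Thref{TH3} we have $R(\tilde{\rho}_A)=C_{\rmr}(\tilde{\rho}_A)$, and by \Thref{TH7} we have $\overline{R}(\tilde{\rho}_A)=C_{\rmf}(\tilde{\rho}_A)$; since $C_{\rmf}\ge C_{\rmr}$ always, the inequality $\overline{R}(\tilde{\rho}_A)\ge R(\tilde{\rho}_A)$ is saturated iff $C_{\rmf}(\tilde{\rho}_A)=C_{\rmr}(\tilde{\rho}_A)$. Hence it suffices to show that $C_{\rmf}(\tilde{\rho}_A)=C_{\rmr}(\tilde{\rho}_A)$ holds exactly when $\tilde{\rho}_A$ is pure, or $\tilde{\rho}_A=\bigoplus_k p_k|\psi_k\rangle\langle\psi_k|$ with each $|\psi_k\rangle$ supported on one block $B_k$ of some partition of the reference-basis labels; this is the content of Theorem~10 of \cite{WintY16}, restated in the present language.

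\emph{Sufficiency.} I would argue directly from $C_{\rmr}(\rho)=S(\rho^{\diag})-S(\rho)$. If $\tilde{\rho}_A$ is pure the claim is immediate, since $C_{\rmf}$ is the convex roof of $C_{\rmr}$. Otherwise write $\tilde{\rho}_A=\sum_k p_k|\psi_k\rangle\langle\psi_k|$ with the $|\psi_k\rangle$ supported on disjoint blocks. Then the $|\psi_k\rangle$ are mutually orthogonal, so $S(\tilde{\rho}_A)=H(\{p_k\})$, and the diagonal matrices $(|\psi_k\rangle\langle\psi_k|)^{\diag}$ have disjoint supports, so $S(\tilde{\rho}_A^{\diag})=H(\{p_k\})+\sum_k p_k S\bigl((|\psi_k\rangle\langle\psi_k|)^{\diag}\bigr)$. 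Subtracting yields $C_{\rmr}(\tilde{\rho}_A)=\sum_k p_k S\bigl((|\psi_k\rangle\langle\psi_k|)^{\diag}\bigr)=\sum_k p_k C_{\rmr}(|\psi_k\rangle\langle\psi_k|)$. Since $\{p_k,|\psi_k\rangle\}$ is a legitimate pure-state decomposition of $\tilde{\rho}_A$, this gives $C_{\rmf}(\tilde{\rho}_A)\le\sum_k p_k C_{\rmr}(|\psi_k\rangle\langle\psi_k|)=C_{\rmr}(\tilde{\rho}_A)$, and equality follows from $C_{\rmf}\ge C_{\rmr}$.

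\emph{Necessity.} I would use the identity $C_{\rmr}(\rho)=S(\rho\|\rho^{\diag})$ together with joint convexity of the relative entropy: for any pure-state decomposition $\{p_j,|\psi_j\rangle\}$ of $\tilde{\rho}_A$,
\begin{equation}
\sum_j p_j C_{\rmr}(|\psi_j\rangle\langle\psi_j|)=\sum_j p_j S\bigl(|\psi_j\rangle\langle\psi_j|\,\|\,(|\psi_j\rangle\langle\psi_j|)^{\diag}\bigr)\ge S\bigl(\tilde{\rho}_A\,\|\,\tilde{\rho}_A^{\diag}\bigr)=C_{\rmr}(\tilde{\rho}_A),
\end{equation}
because $\sum_j p_j(|\psi_j\rangle\langle\psi_j|)^{\diag}=\tilde{\rho}_A^{\diag}$. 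Applying this to a decomposition attaining $C_{\rmf}(\tilde{\rho}_A)$, the hypothesis $C_{\rmf}(\tilde{\rho}_A)=C_{\rmr}(\tilde{\rho}_A)$ forces equality in this instance of joint convexity, i.e.\ in the data-processing step that discards the classical register recording $j$. The remaining task, which I expect to be the main obstacle, is to convert this tightness into the asserted combinatorial rigidity: by Petz's equality condition for the data-processing inequality, that register must be recoverable from the pair $(\tilde{\rho}_A,\tilde{\rho}_A^{\diag})$, and since $\tilde{\rho}_A^{\diag}$ is already diagonal, the recovery can succeed only if the states $|\psi_j\rangle$ of the optimal ensemble are perfectly distinguishable by a reference-basis measurement, i.e.\ have mutually disjoint supports --- unless they all coincide, which is precisely the pure case. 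The block-diagonal form of $\tilde{\rho}_A$ then follows. The delicate part is ruling out accidental saturation --- due to spectral degeneracy, or to support overlaps whose contributions cancel in the entropies --- because $C_{\rmr}$ restricted to pure states is not strictly convex; this is exactly the point handled by the reversibility analysis in \cite{WintY16}, where $C_{\rmf}=C_{\rmr}$ is identified with reversibility of the coherence dilution/distillation protocol. The rest is bookkeeping with the additivity of $C_{\rmr}$ and $C_{\rmf}$ over blocks of the reference basis.
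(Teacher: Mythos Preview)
Your proposal is correct and follows the same approach as the paper: reduce the equality $\overline{R}(\tilde{\rho}_A)=R(\tilde{\rho}_A)$ to $C_{\rmf}(\tilde{\rho}_A)=C_{\rmr}(\tilde{\rho}_A)$ via \Thref{TH3} and \Thref{TH7}, and then invoke Theorem~10 of \cite{WintY16}. The paper in fact gives no further argument beyond this citation, whereas you additionally supply a clean direct proof of the sufficiency direction and a reasonable sketch of the necessity via Petz recovery; this extra material is sound as far as it goes, though (as you acknowledge) the full rigidity argument for necessity is ultimately deferred to \cite{WintY16} in both cases.
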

The following corollary is an easy consequence of \thref{thm:twoRates}; a direct proof is presented in the appendix.
\begin{corollary}\Label{cor:HZL}
 	A qubit state $\tilde{\rho}_A$ saturates the inequality
 	$\overline{R}(\tilde{\rho}_A)\geq R(\tilde{\rho}_A)$ iff $\tilde{\rho}_A$ is pure or incoherent.
 \end{corollary}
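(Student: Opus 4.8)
The plan is to deduce the corollary from Theorem~\ref{thm:twoRates} by specialization, and to back this up with a self-contained argument through the coherence measures. For the specialization: when $d_A=2$ the reference basis is $\{|0\rangle,|1\rangle\}$ and its only nontrivial partition is into the two singletons, with spanned subspaces the lines $\mathbb{C}|0\rangle$ and $\mathbb{C}|1\rangle$; an eigenbasis of $\tilde{\rho}_A$ supported on these two lines must be $\{|0\rangle,|1\rangle\}$ up to phases, i.e.\ $\tilde{\rho}_A$ is diagonal in the reference basis and hence incoherent (the degenerate case $\tilde{\rho}_A\propto I$ is incoherent too), while the converse is immediate; so Theorem~\ref{thm:twoRates} reads, for a qubit, exactly as the corollary. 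The only point to watch here is that the one-block partition must be understood as not qualifying---as it must already in Theorem~\ref{thm:twoRates}, lest $\overline{R}=R$ hold universally.

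For the direct argument: by Theorems~\ref{TH3} and \ref{TH7} the saturation of $\overline{R}(\tilde{\rho}_A)\ge R(\tilde{\rho}_A)$ is equivalent to $C_{\rmf}(\tilde{\rho}_A)=C_{\rmr}(\tilde{\rho}_A)$, so I would characterize equality of these measures on a qubit. Writing $a=\langle0|\tilde{\rho}_A|0\rangle$, $c=\langle0|\tilde{\rho}_A|1\rangle$, $v_z=2a-1$, $v_\perp=2|c|$, $r=\sqrt{v_z^2+v_\perp^2}$ (so $r\le1$, $r=1$ iff $\tilde{\rho}_A$ pure, $v_\perp=0$ iff $\tilde{\rho}_A$ incoherent), $h(x)=-x\log x-(1-x)\log(1-x)$ and $f(x)=h\!\big(\tfrac{1+x}{2}\big)$, one has $C_{\rmr}(\tilde{\rho}_A)=h(a)-S(\tilde{\rho}_A)=f(v_z)-f(r)$, and the qubit closed form of the coherence of formation gives $C_{\rmf}(\tilde{\rho}_A)=h\!\big(\tfrac{1+\sqrt{1-4|c|^2}}{2}\big)=f\!\big(\sqrt{1-v_\perp^2}\big)$ (this last identity can also be re-derived by minimizing $\sum_jp_jC_{\rmr}(|\psi_j\rangle\langle\psi_j|)$, i.e.\ over chords of the Bloch semicircle through $(a,|c|)$, the horizontal chord being optimal). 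Hence $C_{\rmf}=C_{\rmr}$ amounts to $f\!\big(\sqrt{1-v_\perp^2}\big)+f(r)=f(v_z)$, which at the boundary $v_\perp=0$ or $r=1$ holds trivially since $f(1)=0$; it then remains to prove the strict inequality $f\!\big(\sqrt{1-v_\perp^2}\big)+f(r)>f(v_z)$ for $0<v_\perp$, $r<1$ (assuming $v_z\ge0$ after using $f(v_z)=f(|v_z|)$).

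To finish, fix $v_z$, set $\phi(t)=f(\sqrt t)=h\!\big(\tfrac{1+\sqrt t}{2}\big)$ on $[0,1]$ and $g(u)=\phi(1-u)+\phi(v_z^2+u)$ on $[0,1-v_z^2]$; then $g(0)=\phi(1)+\phi(v_z^2)=f(v_z)=\phi(v_z^2)+\phi(1)=g(1-v_z^2)$, while the quantity to be bounded is $g(v_\perp^2)$ with $v_\perp^2$ strictly interior. Thus everything reduces to the strict concavity of $g$, hence of $\phi$, which is the step I expect to be the crux: differentiating gives $\phi''(t)=\tfrac{1}{4t^{3/2}}\big(\sqrt t\,f''(\sqrt t)-f'(\sqrt t)\big)$ with $f'(x)=\tfrac12\log\tfrac{1-x}{1+x}$, $f''(x)=-\tfrac{1}{(\ln2)(1-x^2)}$, so $\phi''<0$ is equivalent to $\ln\tfrac{1+x}{1-x}<\tfrac{2x}{1-x^2}$ on $(0,1)$, which follows by comparing the Maclaurin series $2\sum_{k\ge0}\tfrac{x^{2k+1}}{2k+1}<2\sum_{k\ge0}x^{2k+1}$. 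Granting this, $g$ is strictly concave with equal endpoint values $f(v_z)$, so $g(v_\perp^2)>f(v_z)$, i.e.\ $C_{\rmf}(\tilde{\rho}_A)>C_{\rmr}(\tilde{\rho}_A)$, as required. Aside from the concavity inequality, the only care needed is in citing or re-deriving the qubit formula for $C_{\rmf}$ and in handling the harmless degenerate cases ($\tilde{\rho}_A\propto I$, degenerate spectrum), all of which are incoherent.
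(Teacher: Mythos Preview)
Your proof is correct. The first route---specializing Theorem~\ref{thm:twoRates} to $d_A=2$---is exactly what the paper states in the main text as the quick derivation, including the observation that the only admissible nontrivial partition of $\{|0\rangle,|1\rangle\}$ forces a diagonal (incoherent) state.

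Your direct argument differs from the paper's in how the strict inequality $C_{\rmf}>C_{\rmr}$ is established for mixed coherent qubit states. The paper (Appendix~\ref{AHZL}) proceeds in two stages: first it checks the equatorial line $y=z=0$ by a derivative computation, obtaining $C_{\rmf}>C_{\rmr}$ unless $x\in\{0,\pm1\}$; then it handles a general mixed coherent state by writing it as a convex combination of an equatorial state $\rho_1$ and a pure state $\rho_2$ with the same $x$-coordinate, and invoking convexity of $C_{\rmr}$ together with $C_{\rmf}(\rho_1)>C_{\rmr}(\rho_1)$ from stage one. You instead fix $v_z$ and reduce everything to the strict concavity of $\phi(t)=h\bigl(\tfrac{1+\sqrt t}{2}\bigr)$, which you verify via the elementary series inequality $\ln\tfrac{1+x}{1-x}<\tfrac{2x}{1-x^2}$. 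Your route is a single, fully explicit calculus argument and avoids the two-stage structure; the paper's route trades that explicitness for a softer step that appeals to the known convexity of $C_{\rmr}$. Both rely on the same closed forms for $C_{\rmr}$ and $C_{\rmf}$ on a qubit, the latter cited from \cite{YZCM}.
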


In many topics of quantum information,
the R\'{e}nyi entropies characterize the exponential decreasing rate of the error probability, which  determines the speed of  convergence \cite{c-expo,s-expo}. 
Concerning secure uniform random number generation,
it is known that the exponential decreasing rate of the leaked information
is characterized  by R\'{e}nyi conditional entropies, as explained in Appendix~\ref{A2A}.
To determine the speed of  convergence 
$d_1(\sM_{F_n} |F_n) \to 0$,
we introduce the R\'{e}nyi relative entropy of coherence
$\underline{C}_{\rmr,\alpha}(\rho):=\min_{\sigma\in \mathcal{I}}\underline{S}_\alpha(\rho\|\sigma)$ \cite{Chitambar,ZHC}
based on the R\'{e}nyi relative entropy
$\underline{S}_\alpha(\rho\|\sigma):= 
\frac{1}{\alpha-1}\log \tr\bigl( \sigma^{\frac{1-\alpha}{2\alpha}}\rho\sigma^{\frac{1-\alpha}{2\alpha}} \bigr)^{\alpha}$ with $\alpha\geq0$ \cite{MullDSF14,WildWY14}\cite{Ha3}\cite[Theorem 5.13]{Haya17book}.
Combining Proposition \ref{LH1} in Appendix \ref{A2A} with a generalization of Theorem~\ref{TH1} in Appendix \ref{AK}, 
we can derive the following theorem, whose proof is relegated to Appendix \ref{ATH3}.
\begin{theorem}\Label{TH3X}
Suppose that $F_n$ are universal 2 hash and have extraction rate $R$.
Then 
	\begin{align}
	\liminf_{n \to \infty}\frac{-1}{n}\log 
	d_1(\sM_{F_n}^* |F_n) 
	\ge  \max_{s \in [0,1]}\frac{s}{2} \bigl(
	\underline{C}_{\rmr,\frac{1+s}{1+2s}}(\tilde{\rho}_A)-R\bigr).
	\Label{Eq4-20B}
	\end{align}
\end{theorem}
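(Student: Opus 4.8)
The plan is to follow the route announced just before the statement: combine the exponential leftover hash lemma for classical--quantum states (Proposition~\ref{LH1}) with a R\'{e}nyi strengthening of \thref{TH1} (carried out in Appendix~\ref{AK}), and then optimize over the auxiliary parameter. First I would unfold the optimal protocol $\sM_{F_n}^*$: by construction it applies no incoherent unitary, so each copy of $\tilde{\rho}$ is measured on the computational basis and the universal~2 hash $F_n$ is then applied to the $n$ outcomes. Writing $\rho_{AE}:=\sM_{\rmc}(\tilde{\rho})$ for the classical--quantum state obtained from a single measurement, we get $\sM_{F_n}^*(\tilde{\rho}^{\otimes n})=\sM_{F_n}\bigl(\rho_{AE}^{\otimes n}\bigr)$, hence $d_1(\sM_{F_n}^*|F_n)=\rE_{F_n}d_1\bigl(\sM_{F_n}(\rho_{AE}^{\otimes n})\bigr)$, with Eve still holding the quantum register $\cH_E^{\otimes n}$ (we use the criterion $d_1$, so no measurement $\sM_E$ is performed).

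Second, I would apply Proposition~\ref{LH1} to the classical--quantum state $\rho_{AE}$: for universal~2 hashing at rate $R$,
\[
\liminf_{n\to\infty}\frac{-1}{n}\log\rE_{F_n}d_1\bigl(\sM_{F_n}(\rho_{AE}^{\otimes n})\bigr)\ \ge\ \max_{s\in[0,1]}\frac{s}{2}\bigl(\underline{H}_{1+s}(A|E)_{\rho_{AE}}-R\bigr),
\]
where $\underline{H}_{1+s}(A|E)_{\rho}:=\max_{\sigma_E}\bigl(-\underline{S}_{1+s}(\rho_{AE}\,\|\,\id_A\otimes\sigma_E)\bigr)$; here I invoke additivity of the sandwiched R\'{e}nyi conditional entropy under tensor powers and continuity of $s\mapsto\underline{H}_{1+s}(A|E)_{\rho_{AE}}$ on $[0,1]$ to pass from the single-copy quantity to the regularized exponent. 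This reduces the statement to evaluating $\underline{H}_{1+s}(A|E)_{\sM_{\rmc}(\tilde{\rho})}$.

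Third, I would prove the R\'{e}nyi version of \thref{TH1} stated in Appendix~\ref{AK}, namely
\[
\underline{H}_{1+s}(A|E)_{\sM_{\rmc}(\tilde{\rho})}\ =\ \underline{C}_{\rmr,\frac{1+s}{1+2s}}(\tilde{\rho}_A)\qquad(s\in[0,1]),
\]
which upon substitution into the previous display yields \eqref{Eq4-20B}. The argument mirrors the proof of \thref{TH1}: adjoin the copying ancilla $\cH_B$ in state $|0\rangle$ and apply $U_{\mrm{CNOT}}$, producing a pure tripartite state $\tilde{\rho}[U_{\mrm{CNOT}}]_{ABE}$ with $\tilde{\rho}[U_{\mrm{CNOT}}]_{AE}=\sM_{\rmc}(\tilde{\rho})_{AE}$ and with $\tilde{\rho}[U_{\mrm{CNOT}}]_{AB}$ a maximally correlated state associated with $\tilde{\rho}_A$. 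The duality relation for the sandwiched R\'{e}nyi conditional entropy (optimized over $\sigma_E$, resp.\ $\sigma_B$) on a pure state gives $\underline{H}_{1+s}(A|E)=-\underline{H}_{\beta}(A|B)$ with $\frac1\beta+\frac1{1+s}=2$, i.e.\ $\beta=\frac{1+s}{1+2s}$; and an evaluation of $-\underline{H}_{\beta}(A|B)$ on maximally correlated states (the R\'{e}nyi counterpart of \cite[Theorem~1]{ZHC}) gives $-\underline{H}_{\beta}(A|B)_{\tilde{\rho}[U_{\mrm{CNOT}}]}=\underline{C}_{\rmr,\beta}(\tilde{\rho}_A)$. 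Specializing to $s=0$ recovers \eqref{eq:McCNOT} since then $\beta=1$ and $\underline{C}_{\rmr,1}=C_{\rmr}$, which is a useful consistency check.

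The main obstacle is the R\'{e}nyi bookkeeping in the third step. One must check that the sandwiched-R\'{e}nyi duality is valid for the relevant (non-normalized) quantities at orders $1+s\in[1,2]$ and $\frac{1+s}{1+2s}\in[\frac23,1]$, and that the optimum over incoherence-preserving dilations is still attained by the generalized CNOT gate for every such order; the latter requires a data-processing/monotonicity chain for $\underline{C}_{\rmr,\alpha}$ under incoherence-preserving maps, analogous to \eqref{eq:H271} but valid on the whole parameter interval, together with the fact that incoherent bipartite states are separable so that $\underline{C}_{\rmr,\alpha}$ dominates the corresponding R\'{e}nyi entanglement quantity. Once the identity $\underline{H}_{1+s}(A|E)_{\sM_{\rmc}(\tilde{\rho})}=\underline{C}_{\rmr,\frac{1+s}{1+2s}}(\tilde{\rho}_A)$ is in hand, combining it with Proposition~\ref{LH1} and taking the supremum over $s\in[0,1]$ finishes the proof.
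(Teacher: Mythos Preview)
Your proposal is correct and matches the paper's proof: apply Proposition~\ref{LH1} to the C--Q state $\sM_\rmc(\tilde{\rho})$, then invoke the R\'enyi generalization of \thref{TH1} (the paper's \thref{TH2}) via the CNOT purification, the sandwiched-R\'enyi duality \eqref{eq:H267}, and \cite[Theorem~1]{ZHC} on the maximally correlated state to obtain $\overline{H}_{1+s}^\uparrow(A|E)_{\sM_\rmc(\tilde{\rho})}=\underline{C}_{\rmr,\frac{1+s}{1+2s}}(\tilde{\rho}_A)$. Two minor simplifications are available: Proposition~\ref{LH1} already delivers the single-copy exponent directly (no separate additivity/continuity argument is required), and for \thref{TH3X} only the CNOT identity is needed---not the full optimization over incoherence-preserving maps---so the monotonicity chain you flag as the ``main obstacle'' is not actually used here (it enters only in the stronger \thref{TH2}).
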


\Thref{TH3X} shows that the exponential decreasing rate of the leaked information of the strategy $\sM_{F_n}^*$ is characterized  by R\'{e}nyi relative entropies of coherence.
In other words, the quality of the random numbers extracted in this way is controlled by these coherence measures.
In information theory, another useful security measure is the relative entropy between the true state and the ideal state \cite{CN}, which is known to be the unique measure under several natural assumptions \cite[Theorem 8]{epsilon}. In 
Appendix~\ref{AAT} we show that an analog of \thref{TH3X} holds for this alternative measure.

{\it Conclusion:\quad}
We  studied the extraction of secure uniform random numbers in the quantum-quantum setting
under incoherent strategies, assuming that
Eve can access all of the environment of the given system.
This problem properly reflects the situation of a quantum random number generator.
We  showed that
the maximum rate of extraction is equal to the relative entropy of coherence.
In contrast, the extraction rate with a constrained eavesdropper  
is equal to the coherence of formation.
Furthermore, the exponential decreasing rate of the leaked information is characterized by R\'{e}nyi relative entropies of coherence.
These results not only clarify the capability
of incoherent strategies in extracting secure uniform random numbers, but also endow coherence measures mentioned above with operational meanings.

To apply our results to the security evaluation of a quantum random number generator, 
we need to estimate the quantum state $\tilde{\rho}_A$ on Alice's system priorly.
Fortunately, as explained in Appendix~\ref{industrial}, this task can be achieved by
 quantum state tomography, which has been well established 
\cite{Hol,Gill,HayaB}\cite[Chapter 6]{Haya17book}.
Even when  Alice's quantum system cannot be 
trusted,
we can estimate the quantum state $\tilde{\rho}_A$ of Alice's system 
by combining the method of self testing \cite{MY2,McKague1,HH}.
Therefore,
our study is helpful to the design of 
a quantum random number generator; see Appendix~\ref{industrial} for more details.

\acknowledgments
MH was supported in part by JSPS Grants-in-Aid for Scientific Research (A) No.17H01280 and (B) No. 16KT0017 and Kayamori Foundation of Informational Science Advancement.
HZ acknowledges financial support
from the Excellence
Initiative of the German Federal and State Governments
(ZUK~81) and the DFG.
The authors are very grateful to Professor Lin Chen for helpful discussions and
comments.

\appendix

\section{Secure uniform random number extraction from a C-Q state}\Label{A2A}
Here, we summarize  known results on
secure uniform random number extraction
when the state is a C-Q state on the composite system $\cH_A\otimes \cH_E$, which has the form
\begin{equation}
\rho_{AE}=\sum_{a} P_A(a) |a\rangle \langle a| \otimes \rho_{E|a}.
\end{equation}
Given a function $f$, 
we define the state
\begin{equation}
\rho_{f(A)E}:=
\sum_{a} P_A(a) |f(a)\rangle \langle f(a)| \otimes \rho_{E|a}.
\end{equation}
 To study  secure uniform random number extraction from a C-Q state,
we need to  consider the uncertainty quantified by three types of  R\'enyi  conditional entropies,
\begin{align}
\overline{H}_\alpha^{\uparrow}(A|E)_{\rho}
&:=-\min_{\sigma_E} \underline{S}_\alpha( \rho_{AE} \| I_A \otimes \sigma_E) \Label{U1}, \\
\overline{H}_\alpha^{\downarrow}(A|E)_{\rho}
&:=- \underline{S}_\alpha( \rho_{AE} \| I_A \otimes \rho_{E}), \\
{H}_\alpha^{\uparrow}(A|E)_{\rho}
&:=-\min_{\sigma_E} {S}_\alpha( \rho_{AE} \| I_A \otimes \sigma_{E}).
\end{align}
Here the two types of R\'{e}nyi relative entropies are defined as
\cite{MullDSF14,WildWY14}
\cite[Section 3.1]{Haya17book}
\begin{align}
S_\alpha(\rho\|\sigma)&:= \frac{1}{\alpha-1}\log \tr (\rho^{\alpha}\sigma^{1-\alpha}), \label{eq:RREa}\\
\underline{S}_\alpha(\rho\|\sigma)&:= 
\frac{1}{\alpha-1}\log \tr\bigl( \sigma^{\frac{1-\alpha}{2\alpha}}\rho\sigma^{\frac{1-\alpha}{2\alpha}} \bigr)^{\alpha}, \label{eq:RREb}
\end{align}
which satisfy the inequality $S_\alpha(\rho\|\sigma) \ge
\underline{S}_\alpha(\rho\|\sigma)$.  Both $S_\alpha(\rho\|\sigma)$ and $\underline{S}_\alpha(\rho\|\sigma)$ increase monotonically with $\alpha$.

To extract secure uniform random numbers, we can employ a universal 2 hash function.
A random function $F$ from ${\cal A}$ to ${\cal Z}$ 
is called universal 2 hash if
\begin{align}
\Pr \{ F(a) =F(a')\} \le \frac{1}{|{\cal Z}|}
\end{align}
for $a\neq a' \in {\cal A}$. 
This type of hash functions satisfy the following leftover hashing lemma.
\begin{proposition}[\protect{\cite{Renner}}]\Label{LR}
Let $F$ be a universal 2 hash function from ${\cal A}$ to ${\cal Z}$.
Then, we have
\begin{align}
\rE_F d_1(\rho_{F(A)E} )
\le 
|{\cal Z}|^{\frac{1}{2}} 
2^{-\frac{1}{2}\overline{H}^{\uparrow}_2(A|E)_{\rho_{AE}}}.
\end{align}
\end{proposition}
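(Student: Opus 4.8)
The plan is to run the standard leftover-hashing argument: first downgrade the trace distance to a Hilbert--Schmidt norm weighted by an auxiliary operator $\sigma_E$, then average the \emph{squared} norm over the random hash using only the universal-$2$ property, recognise the resulting collision quantity as $2^{\underline{S}_2(\rho_{AE}\|I_A\otimes\sigma_E)}$, and finally optimise over $\sigma_E$ to produce $\overline{H}^{\uparrow}_2$.

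First I would record the ``sandwiching'' norm bound: for any Hermitian $Y$ on $\mathcal{H}_{\mathcal Z}\otimes\mathcal{H}_E$ and any state $\sigma_E>0$,
\[
\|Y\|_1\ \le\ \sqrt{|\mathcal{Z}|}\ \bigl\|(I_{\mathcal Z}\otimes\sigma_E^{-1/4})\,Y\,(I_{\mathcal Z}\otimes\sigma_E^{-1/4})\bigr\|_2 ,
\]
which follows by writing $Y=(I_{\mathcal Z}\otimes\sigma_E^{1/4})\,Z\,(I_{\mathcal Z}\otimes\sigma_E^{1/4})$ and applying H\"{o}lder's inequality for Schatten norms with exponents $(4,2,4)$, using $\|I_{\mathcal Z}\otimes\sigma_E^{1/4}\|_4^2=\sqrt{|\mathcal{Z}|}$. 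Applying this with $Y=\rho_{F(A)E}-\tau_{|\mathcal{Z}|}\otimes\rho_E$, taking $\rE_F$, and passing the expectation inside the square root by concavity, it remains to bound $\rE_F\|\bar\rho_{F(A)E}-\tau_{|\mathcal{Z}|}\otimes\bar\rho_E\|_2^2$, where a bar denotes conjugation by $I\otimes\sigma_E^{-1/4}$, so that $\bar\rho_{E|a}=\sigma_E^{-1/4}\rho_{E|a}\sigma_E^{-1/4}$ and $\bar\rho_E=\sum_aP_A(a)\bar\rho_{E|a}$.

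Next I would expand this squared $2$-norm. Since the $\mathcal{H}_E$-marginal of $\bar\rho_{F(A)E}$ equals $\bar\rho_E$ for every hash, the cross term and the ideal-state term combine, giving $\|\bar\rho_{F(A)E}-\tau_{|\mathcal{Z}|}\otimes\bar\rho_E\|_2^2=\tr\bar\rho_{F(A)E}^2-|\mathcal{Z}|^{-1}\tr\bar\rho_E^2$. Writing $\bar\rho_{F(A)E}=\sum_aP_A(a)\,\ketbra{F(a)}{F(a)}\otimes\bar\rho_{E|a}$ yields $\tr\bar\rho_{F(A)E}^2=\sum_{a,a'}P_A(a)P_A(a')\,\delta_{F(a),F(a')}\,\tr(\bar\rho_{E|a}\bar\rho_{E|a'})$. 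Taking $\rE_F$: the diagonal terms ($a=a'$) keep weight $1$, while the off-diagonal ones acquire $\rE_F\,\delta_{F(a),F(a')}=\Pr\{F(a)=F(a')\}\le|\mathcal{Z}|^{-1}$ from the universal-$2$ property; since $\tr(\bar\rho_{E|a}\bar\rho_{E|a'})\ge0$ (a product of two positive operators), enlarging the off-diagonal sum to the full sum only increases it, and that full sum equals $|\mathcal{Z}|^{-1}\tr\bar\rho_E^2$, which cancels the subtracted term. Hence
\[
\rE_F\|\bar\rho_{F(A)E}-\tau_{|\mathcal{Z}|}\otimes\bar\rho_E\|_2^2\ \le\ \sum_aP_A(a)^2\,\tr\bar\rho_{E|a}^2=\tr\bigl[(I_A\otimes\sigma_E^{-1/4})\rho_{AE}(I_A\otimes\sigma_E^{-1/4})\bigr]^2=2^{\underline{S}_2(\rho_{AE}\|I_A\otimes\sigma_E)} ,
\]
the last equality being just the definition of $\underline{S}_2$ in \eqref{eq:RREb}. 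Chaining the inequalities gives $\rE_F\,d_1(\rho_{F(A)E})\le\sqrt{|\mathcal{Z}|}\;2^{\frac12\underline{S}_2(\rho_{AE}\|I_A\otimes\sigma_E)}$ for \emph{every} $\sigma_E$; since the left side is independent of $\sigma_E$, taking the infimum over $\sigma_E$ and recalling $\overline{H}^{\uparrow}_2(A|E)_{\rho}=-\min_{\sigma_E}\underline{S}_2(\rho_{AE}\|I_A\otimes\sigma_E)$ from \eqref{U1} gives the claimed bound.

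I expect the conceptual content to reside entirely in the two displayed steps; the remainder is bookkeeping. One handles non-invertible $\sigma_E$ by working on its support together with a limiting argument (and one may assume $\supp\rho_E\subseteq\supp\sigma_E$, since otherwise $\underline{S}_2=+\infty$ and there is nothing to prove), and one checks that $\tau_{|\mathcal{Z}|}\otimes\rho_E$ is precisely the ideal state against which $d_1(\rho_{F(A)E})$ is measured, so that the cross terms cancel exactly. The main point requiring care is keeping the weight $\sigma_E^{-1/4}$ split \emph{symmetrically} around $Y$ in the sandwiching step, so that the collision quantity lands on the sandwiched R\'{e}nyi divergence $\underline{S}_2$ of \eqref{eq:RREb} used here rather than on the divergence $S_2$ of \eqref{eq:RREa}; with the symmetric placement this identification is automatic.
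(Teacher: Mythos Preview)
Your argument is correct and is precisely the standard proof of the quantum leftover hashing lemma (H\"older/sandwiching to pass from $\|\cdot\|_1$ to a weighted $\|\cdot\|_2$, Jensen, the universal-$2$ collision bound, and identification with $\underline{S}_2$). Note, however, that the paper does not supply its own proof of this proposition: it is quoted verbatim as a known result from Renner's thesis \cite{Renner} and used as a black box (in particular as input to \eqref{HE1}), so there is no in-paper argument to compare against. Your write-up is exactly the proof one finds in that literature, so there is nothing to add.
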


To characterize the ultimate amount of extracted secure uniform random numbers,
we define the rate
\begin{align}
&K(\rho_{AE}):=\nonumber \\
&\sup_{F_{n}} \Big\{ 
\liminf_{n\to \infty}
\frac{\log |F_n|}{n} \Big|
\rE_{F_n} d_1\bigl( (\rho^{\otimes n})_{F_{n}(A)E}\bigr)\to 0
\Big\},
\end{align}
where $|F_{n}| $ denotes  the cardinality of the image of  $F_{n}$ and the supremum is taken over sequences of  random hash functions which satisfy the given condition.
The quantity $K(\rho_{AE})$ expresses the maximum extraction rate of secure uniform random numbers.

In this setting, the simple application of Proposition~\ref{LR} cannot guarantee the exponential decrease of the leaked information
even when the extraction rate $R$ of uniform random numbers is smaller than the conditional entropy $H(A|E)_{\rho_{AE}}$.
To resolve this problem, we employ another proposition based on the
discussions in \cite{Ha1}, which in turn rely on Proposition~\ref{LR}.

\begin{proposition}\Label{LH1}
If a sequence of hash functions
$F_{n}$ from ${\cal A}^{n}$ to $\{1, \ldots, 2^{nR}\}$
is universal 2 hash, then
\begin{align}
&\liminf_{n \to \infty}
-\frac{1}{n}\log 
\rE_{F_n} d_1\bigl( (\rho^{\otimes n})_{F_{n}(A)E} \bigr)\nonumber \\
\ge & \max_{s \in [0,1]}
\frac{1}{2} \bigl(s \overline{H}_{1+s}^{\uparrow}( A|E)_{\rho_{AE}} -sR\bigr).
\Label{Eq4-20AC}
\end{align}
\end{proposition}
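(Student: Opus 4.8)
The plan is to upgrade the leftover hashing lemma (Proposition~\ref{LR}), which controls $\rE_F d_1$ through the collision conditional entropy $\overline{H}_2^{\uparrow}$, into a bound governed by the whole family $\overline{H}_{1+s}^{\uparrow}$ with $s\in[0,1]$, and then to tensorize and take the limit. The core is a \emph{single-shot smoothed estimate}: for every $s\in[0,1]$ there is an absolute constant $c$ such that, for any universal $2$ hash $F$ from $\mathcal{A}$ to $\mathcal{Z}$,
\[
\rE_F d_1(\rho_{F(A)E})\ \le\ c\cdot 2^{-\frac{s}{2}\bigl(\overline{H}_{1+s}^{\uparrow}(A|E)_{\rho}-\log|\mathcal{Z}|\bigr)} .
\]
To obtain this I would fix $s$ and an arbitrary $\sigma_E$, pick a real threshold $\gamma$, and split $\rho_{AE}=\rho'+\rho''$ by an appropriate thresholding of $\rho_{AE}$ against $I_A\otimes\sigma_E$, so that the truncated piece $\rho'$ has $\overline{H}_2^{\uparrow}(A|E)_{\rho'}\gtrsim\gamma$ while the discarded piece $\rho''$ is ``atypical''. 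Applying Proposition~\ref{LR} to $\rho'$ contributes a term of order $(|\mathcal{Z}|\,2^{-\gamma})^{1/2}$ — the square root being precisely what produces the prefactor $\tfrac12$ — while a Markov/Chernoff operator inequality with exponent $s$ bounds $\|\rho''\|_1$ by $2^{-s\gamma}\,2^{\,s\underline{S}_{1+s}(\rho_{AE}\|I_A\otimes\sigma_E)}$, the power $1+s$ entering through that inequality. Adding the two contributions via the triangle inequality, balancing the two exponentials by optimizing $\gamma$, and finally minimizing over $\sigma_E$ yields the displayed bound; this is essentially the single-shot content of \cite{Ha1}, which is why the excerpt invokes that reference together with Proposition~\ref{LR}.

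Given the single-shot estimate, the second step is routine. I would apply it to $\rho_{AE}^{\otimes n}$ with $\mathcal{Z}=\{1,\dots,2^{nR}\}$, so $\log|\mathcal{Z}|=nR$. Substituting the product ansatz $\sigma_E^{\otimes n}$ into the minimization that defines $\overline{H}_{1+s}^{\uparrow}$ and using additivity of $\underline{S}_{1+s}$ on tensor products gives the superadditivity $\overline{H}_{1+s}^{\uparrow}(A|E)_{\rho^{\otimes n}}\ge n\,\overline{H}_{1+s}^{\uparrow}(A|E)_{\rho}$, hence
\[
\rE_{F_n} d_1\bigl((\rho^{\otimes n})_{F_n(A)E}\bigr)\ \le\ c\cdot 2^{-\frac{ns}{2}\bigl(\overline{H}_{1+s}^{\uparrow}(A|E)_{\rho}-R\bigr)} .
\]
Taking $-\tfrac1n\log$, the term $\tfrac1n\log c$ vanishes in the limit, so $\liminf_{n\to\infty}-\tfrac1n\log \rE_{F_n} d_1\ge \tfrac{s}{2}\bigl(\overline{H}_{1+s}^{\uparrow}(A|E)_{\rho}-R\bigr)$ for each fixed $s\in[0,1]$, and maximizing over $s$ gives \eqref{Eq4-20AC}. (The harmless rounding of $2^{nR}$ to an integer does not affect the rate.)

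The hard part is the single-shot smoothed estimate. One must land on exactly the \emph{sandwiched} Rényi quantity $\overline{H}_{1+s}^{\uparrow}$ rather than a Petz-type variant, which constrains the admissible choice of threshold operator; and one must arrange the $\gamma$-optimization so that it meshes correctly with both the square-root loss in Proposition~\ref{LR} and the exponent $1+s$ in the Chernoff bound, checking that the $O(1)$ slack in $\overline{H}_2^{\uparrow}(A|E)_{\rho'}\gtrsim\gamma$ and the additive constant $c$ do not leak into the exponent. Once that is settled, everything downstream — the tensor-power superadditivity of $\overline{H}_{1+s}^{\uparrow}$ and the $\liminf$ — is straightforward.
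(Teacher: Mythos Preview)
Your high-level strategy---smooth the leftover hashing lemma and then pass to the $n$-copy limit---matches the paper's, but the execution diverges at the key step. You posit a single-shot bound $\rE_F d_1(\rho_{F(A)E})\le c\cdot 2^{-\frac{s}{2}(\overline{H}_{1+s}^{\uparrow}(A|E)_\rho-\log|\mathcal{Z}|)}$ with an \emph{absolute} constant $c$ and the \emph{sandwiched} R\'enyi quantity already at the single-copy level, obtained via a threshold split plus an operator Markov/Chernoff inequality. That Chernoff step is where the sketch is thin: applied to a threshold projector for $\rho_{AE}$ against $I_A\otimes\sigma_E$, the operator Markov inequality naturally produces the \emph{Petz} quantity $S_{1+s}(\rho_{AE}\|I_A\otimes\sigma_E)$, not the sandwiched $\underline{S}_{1+s}$. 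You flag this (``one must land on exactly the sandwiched R\'enyi quantity\dots'') but do not say how, and the reference \cite{Ha1} you invoke does not supply a dimension-free single-shot bound of this form.

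The paper avoids the issue by never claiming such a single-shot estimate. It works directly on $\rho_{AE}^{\otimes n}$ and bounds $\Delta_{d,2}(2^{nR}|\rho_{AE}^{\otimes n})$ by $(4+\sqrt{v_n})\,2^{\frac{s}{2}nR+\frac{s}{2}S_{1+s}(\sE_{\sigma_E^{\otimes n}}(\rho_{AE}^{\otimes n})\,\|\,I\otimes\sigma_E^{\otimes n})}$, i.e.\ the Petz R\'enyi of the \emph{pinched} state, with a prefactor $(4+\sqrt{v_n})$ that is polynomial in $n$ (hence harmless in the exponent). The sandwiched quantity enters only asymptotically, via the Mosonyi--Ogawa relation $\lim_n \tfrac{1}{n}S_{1+s}\bigl(\sE_{\sigma_E^{\otimes n}}(\rho^{\otimes n})\,\|\,I\otimes\sigma_E^{\otimes n}\bigr)=\underline{S}_{1+s}(\rho\,\|\,I\otimes\sigma_E)$. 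So the passage from Petz to sandwiched is a limit statement, not a single-shot one, and the ``constant'' is not absolute but subexponential in $n$. If your single-shot inequality with absolute $c$ and $\underline{S}_{1+s}$ were available it would be strictly stronger (and would give finite-$n$ bounds directly), but as written the proposal does not establish it; the pinching-plus-limit mechanism is what actually delivers $\overline{H}_{1+s}^{\uparrow}$ in the paper's proof. Your downstream tensorization step (additivity of $\underline{S}_{1+s}$ on product states giving $\overline{H}_{1+s}^{\uparrow}(A|E)_{\rho^{\otimes n}}\ge n\,\overline{H}_{1+s}^{\uparrow}(A|E)_{\rho}$) is fine, but it is not needed in the paper's route since the $n$-copy bound is obtained directly.
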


\begin{proof}
First, we introduce the quantity \cite[Section IV]{Ha1}
\begin{align*}
&\Delta_{d,2}( M |\rho_{AE}) \nonumber \\
&:=\min_{\sigma_E}
\min_{\rho_{AE}'}
\Bigl[
2 \|\rho_{AE}-\rho_{AE}'\|_1 
+ M^{\frac{1}{2}}2^{\frac{1}{2}\underline{S}_{2}(\rho_{AE}'\|I_A \otimes \sigma_E)} \Bigr],
\end{align*}
where $\min_{\rho_{AE}'}$ denotes the minimum under the condition
$\tr \rho_{AE}'\le 1$ and $\rho_{AE}' \ge 0$, while $\min_{\sigma_{E}}$ denotes the minimum over normalized states $\sigma_{E}$.
Here, 
the quantity $\underline{S}_{2}(\rho'\|\sigma)$
is defined in the same way as in \eqref{eq:RREb}, that is,  
$\underline{S}_{2}(\rho'\|\sigma)=\frac{1}{\alpha-1}\log \tr\bigl( \sigma^{\frac{1-\alpha}{2\alpha}}\rho'\sigma^{\frac{1-\alpha}{2\alpha}}\bigr)^{\alpha}$, even when $\rho'$ is not normalized.
Then, as shown in \cite[(73)]{Ha1}, Proposition~\ref{LR} implies that
\begin{align}
\rE_{F_n} d_1\bigl( (\rho^{\otimes n})_{F_{n}(A)E} \bigr)
\le 
\Delta_{d,2}(2^{nR}|\rho_{AE}^{\otimes n}).\Label{HE1}
\end{align}
Let   $v_n$ be the number of distinct eigenvalues of $\sigma_E^{\otimes n}$. Then 
the inequality \cite[the next inequality of (83)]{Ha1} yields that
\begin{align}
&
\Delta_{d,2}(2^{nR}|\rho_{AE}^{\otimes n})
\nonumber \\
& \le (4+\sqrt{v_n}) 2^{ \frac{s}{2} n R
+\frac{s}{2} S_{1+s}\bigl(
\sE_{\sigma_E^{\otimes n}}
(\rho_{AE}^{\otimes n})
\|
I \otimes \sigma_E^{\otimes n}\bigr)
}\Label{HE2}
\end{align}
for $s \in [0,1]$,
where the CPTP map
${\sE}_{\sigma}$ is defined as
\begin{align}
{\sE}_{\sigma}(\rho):=
\sum_{x} E_x \rho E_x,
\end{align}
assuming that $\sigma$ has the spectral decomposition
$\sigma= \sum_x \lambda_x E_x$.

Since $v_n$ is a polynomial in $n$, we have 
\begin{equation}\label{eq:vnlim}
\lim_{n\to \infty}\frac{1}{n}\log v_n=0. 
\end{equation}
In addition,  
\begin{align}\label{eq:RenyiPinch}
\lim_{n \to \infty}\frac{1}{n}
 S_{1+s}\bigl(
{\sE}_{\sigma_E^{\otimes n}}
(\rho_{AE}^{\otimes n})
\|
I \otimes \sigma_E^{\otimes n}\bigr)
=
\underline{S}_{1+s}( 
\rho_{AE}\|
I \otimes \sigma_E)
\end{align}
according to \cite{MO}\cite[(3.17)]{Haya17book}. 
Combining the four equations \eqref{HE1}, \eqref{HE2}, \eqref{eq:vnlim}, and \eqref{eq:RenyiPinch} yields
\begin{align}
&\liminf_{n \to \infty}\frac{-1}{n}\log 
\rE_{F_n} d_1\bigl( (\rho^{\otimes n})_{F_n(A)E} \bigr)
 \nonumber \\
\ge & \max_{s \in [0,1]}\frac{1}{2} \bigl(
- s \underline{S}_{1+s}( 
\rho_{AE}
\|
I \otimes \sigma_E)
- s R\bigr).
\Label{Eq4-20A}
\end{align}
Taking the maximum of the right hand side (RHS) of \eqref{Eq4-20A}
over $\sigma_E$,
we obtain \eqref{Eq4-20AC}.
\end{proof}

When $R < H(A|E)$,
according to Proposition \ref{LH1},
 the amount of leaked information $\rE_{F_{n}} d_1\bigl( (\rho^{\otimes n})_{F_{n}(A)E} \bigr)$
 goes to zero.
 Hence,  we have
 \begin{align}
 K(\rho_{AE}) \ge H(A|E)_{\rho_{AE}}.
 \end{align}
Since the opposite inequality also holds \cite{Renner}\cite[(93)]{Ha1}, we deduce the following proposition.
\begin{proposition}\cite{Renner}\cite[(94)]{Ha1}\Label{LH3}
 \begin{align}
 K(\rho_{AE}) = H(A|E)_{\rho_{AE}}.
 \end{align}
\end{proposition}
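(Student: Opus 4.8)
\textbf{Proof proposal for Proposition~\ref{LH3}.}

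The plan is to establish the two matching inequalities. For the achievability direction $K(\rho_{AE}) \ge H(A|E)_{\rho_{AE}}$, I would invoke Proposition~\ref{LH1}, which has just been proved. Fix any rate $R < H(A|E)_{\rho_{AE}}$ and a sequence of universal~2 hash functions $F_n$ from $\mathcal{A}^n$ to $\{1,\dots,2^{nR}\}$. The right-hand side of \eqref{Eq4-20AC} is $\max_{s\in[0,1]}\frac{s}{2}\bigl(\overline{H}_{1+s}^{\uparrow}(A|E)_{\rho_{AE}}-R\bigr)$. Since $\overline{H}_\alpha^{\uparrow}(A|E)$ is continuous and monotone in $\alpha$ with $\overline{H}_1^{\uparrow}(A|E)_{\rho_{AE}} = H(A|E)_{\rho_{AE}} > R$, there is an interval of $s>0$ on which $\overline{H}_{1+s}^{\uparrow}(A|E)_{\rho_{AE}} > R$; picking such an $s$ makes the exponent strictly positive, so $\rE_{F_n} d_1\bigl((\rho^{\otimes n})_{F_n(A)E}\bigr)$ decays exponentially, in particular tends to zero. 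Hence every $R < H(A|E)_{\rho_{AE}}$ is an admissible extraction rate, which gives $K(\rho_{AE}) \ge H(A|E)_{\rho_{AE}}$ after taking the supremum.

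For the converse $K(\rho_{AE}) \le H(A|E)_{\rho_{AE}}$, I would quote the standard strong-converse bound for secure uniform random number extraction from a C-Q state, as recorded in \cite{Renner} and \cite[(93)]{Ha1}: if $R > H(A|E)_{\rho_{AE}}$ then $\rE_{F_n} d_1\bigl((\rho^{\otimes n})_{F_n(A)E}\bigr)$ is bounded away from zero (indeed tends to $1$) for every choice of hash function, essentially because the output system is too large relative to Eve's conditional uncertainty. Consequently no sequence with asymptotic rate exceeding $H(A|E)_{\rho_{AE}}$ can satisfy the vanishing-leakage condition, so $K(\rho_{AE}) \le H(A|E)_{\rho_{AE}}$. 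Combining the two inequalities yields the claimed equality.

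The genuinely new content here is the achievability half, and its only subtle point is the limiting behaviour of the exponent in \eqref{Eq4-20AC} as $s\to 0^+$: one must check that $\overline{H}_{1+s}^{\uparrow}(A|E)_{\rho_{AE}} \to H(A|E)_{\rho_{AE}}$ so that the strict inequality $R < H(A|E)_{\rho_{AE}}$ transfers to $R < \overline{H}_{1+s}^{\uparrow}(A|E)_{\rho_{AE}}$ for small positive $s$. This is a known property of the sandwiched R\'enyi conditional entropy (continuity at $\alpha=1$ with limiting value the von Neumann conditional entropy), so it is not a real obstacle, merely a point to state cleanly; the converse half is entirely off-the-shelf. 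Thus the proof is essentially a packaging of Proposition~\ref{LH1} together with the cited converse.
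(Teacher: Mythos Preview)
Your proposal is correct and follows essentially the same route as the paper: achievability via Proposition~\ref{LH1} (using that the exponent in \eqref{Eq4-20AC} is strictly positive for $R<H(A|E)_{\rho_{AE}}$ by continuity of $\overline{H}_{1+s}^{\uparrow}$ at $s=0$), and the converse by citing \cite{Renner} and \cite[(93)]{Ha1}. The paper's own argument is just the short paragraph preceding the proposition and is less explicit about the continuity point than you are.
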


In the current context, 
we often consider another security criterion $I'(\rho_{AE})$ defined as the relative entropy between the true state and the ideal state \cite[(29)]{Ha1}\cite[(9)]{Ha2},
\begin{align}\label{eq:AltSC}
I'(\rho_{AE}):= S(\rho_{AE}\|\tau_{|{\cal A}|}\otimes \rho_{E})=\log |{\cal A}|-H(A|E)_{{\rho}_{AE}},
\end{align}
where $\tau_{|{\cal A}|}$ denotes the completely mixed state on  $\caH_A$. 
Under this security criterion, we have the following analog of  Proposition~\ref{LH1}.
\begin{proposition}\cite[(33)]{Ha2}\Label{LH2}
If a sequence of hash functions
$F_{n}$ from ${\cal A}^{n}$ to $\{1, \ldots, 2^{nR}\}$
is universal 2 hash, then
\begin{align}
&\liminf_{n \to \infty}
-\frac{1}{n}\log 
\rE_{F_n} I'\bigl( (\rho^{\otimes n})_{F_{n}(A)E} \bigr)\nonumber \\
\ge & \max_{s \in [0,1]}
 \bigl(s \overline{H}_{1+s}^{\downarrow}( A|E)_{\rho_{AE}} -sR\bigr).
\Label{Eq4-20AD}
\end{align}
\end{proposition}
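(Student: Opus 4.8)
The plan is to mirror the proof of Proposition~\ref{LH1}, replacing the trace-norm criterion $d_1$ by the relative-entropy criterion $I'$ at every stage, and then tracking how the combinatorial factors and the R\'enyi exponents change. First I would invoke the analog of the leftover hashing lemma (Proposition~\ref{LR}) phrased for $I'$ rather than $d_1$: for a universal~2 hash function $F$ from ${\cal A}$ to ${\cal Z}$ one has a bound of the form $\rE_F I'(\rho_{F(A)E}) \le g\bigl(|{\cal Z}|^{\frac{1}{2}} 2^{-\frac{1}{2}\overline{H}^{\uparrow}_2(A|E)_{\rho_{AE}}}\bigr)$ for a suitable function $g$, or more directly the bound of \cite[(33)]{Ha2} in its single-shot form. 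The key structural difference from the $d_1$ case is that the relevant smoothed quantity is built from $\underline{S}_\alpha(\rho_{AE}\|I_A\otimes\rho_E)$ with Eve's \emph{actual} marginal $\rho_E$ fixed, rather than minimized over $\sigma_E$; this is exactly why the \emph{downward} conditional entropy $\overline{H}_{1+s}^{\downarrow}$ appears in \eqref{Eq4-20AD} in place of the \emph{upward} $\overline{H}_{1+s}^{\uparrow}$ of \eqref{Eq4-20AC}. A second difference is the coefficient: the $d_1$ bound carries a factor $\tfrac12$ in the exponent (coming from the square root in Proposition~\ref{LR}), whereas for $I'$ the relative-entropy version of the tail argument produces the exponent $s\overline{H}_{1+s}^{\downarrow}(A|E)_{\rho_{AE}}-sR$ without the extra $\tfrac12$, because $I'$ is already on the ``entropy'' scale rather than the ``amplitude'' scale.

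Concretely, the steps are as follows. (i) Establish the single-shot estimate $\rE_{F_n} I'\bigl((\rho^{\otimes n})_{F_n(A)E}\bigr) \le \widetilde{\Delta}(2^{nR}\mid \rho_{AE}^{\otimes n})$ for an appropriate ``smoothed'' quantity $\widetilde\Delta$ analogous to $\Delta_{d,2}$ but governing $I'$; this is the content of \cite[(33)]{Ha2} and follows from the $I'$-version of the leftover hashing lemma by a change-of-state (smoothing) argument. (ii) Bound $\widetilde\Delta(2^{nR}\mid\rho_{AE}^{\otimes n})$ above by $(\mathrm{poly}(n)) \cdot 2^{ s n R - s\, S_{1+s}(\sE_{\rho_E^{\otimes n}}(\rho_{AE}^{\otimes n})\|I\otimes\rho_E^{\otimes n})}$ for $s\in[0,1]$, pinching with respect to the spectral projections of $\rho_E^{\otimes n}$ exactly as in \eqref{HE2}; note that here one pinches onto $\rho_E^{\otimes n}$, not onto an optimizing $\sigma_E^{\otimes n}$. (iii) Use the polynomial bound on the number of distinct eigenvalues, $\tfrac1n\log v_n\to 0$, and the pinching limit $\tfrac1n S_{1+s}(\sE_{\rho_E^{\otimes n}}(\rho_{AE}^{\otimes n})\|I\otimes\rho_E^{\otimes n}) \to \underline{S}_{1+s}(\rho_{AE}\|I\otimes\rho_E)$ from \cite{MO}\cite[(3.17)]{Haya17book}, to take the normalized logarithm of the bound in (ii). (iv) Observe that $-\underline{S}_{1+s}(\rho_{AE}\|I_A\otimes\rho_E) = \overline{H}^{\downarrow}_{1+s}(A|E)_\rho$ by definition, and maximize over $s\in[0,1]$ to obtain \eqref{Eq4-20AD}.

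The main obstacle is step~(i)–(ii): getting the correct $I'$-analog of the Rényi tail bound with the right exponent. Unlike the $d_1$ case, where the triangle inequality for the trace norm makes the smoothing argument essentially automatic, the relative entropy is not subadditive under the relevant decomposition, so one must handle the passage from the leftover hashing bound (which is naturally an $L_2$ / collision-entropy statement) to an $I'$-statement with care — typically via a Pinsker-type or an operator-inequality step that converts the $\overline{H}_2$-type bound into a bound on $I'$, and then a substitution $\rho_{AE}\mapsto\rho_{AE}'$ to optimize the Rényi order. This is precisely the technical core of \cite[(33)]{Ha2}, which I would cite for the single-shot inequality; the remaining asymptotic analysis (pinching, polynomial eigenvalue count, the limit formula) is then routine and parallels Proposition~\ref{LH1} line by line. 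A minor point to check is that the optimization over $s$ stays on $[0,1]$ and that the downward entropy $\overline{H}^{\downarrow}_{1+s}$, rather than the upward one, is what emerges — this is a direct consequence of fixing $\sigma_E=\rho_E$ throughout, which in turn is forced because the ideal state in the definition \eqref{eq:AltSC} of $I'$ has Eve's marginal equal to $\rho_E$.
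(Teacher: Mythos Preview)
The paper does not prove Proposition~\ref{LH2}; it is stated as a known result cited from \cite[(33)]{Ha2}, with no argument given in the text. So there is no ``paper's own proof'' to compare your proposal against.

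That said, your reconstruction strategy---mirror the proof of Proposition~\ref{LH1}, replace $d_1$ by $I'$, fix $\sigma_E=\rho_E$ rather than optimize over it, and drop the factor $\tfrac12$---is the natural one, and the two structural differences you flag (downward entropy $\overline{H}^{\downarrow}_{1+s}$ in place of $\overline{H}^{\uparrow}_{1+s}$; exponent $s(\cdot)$ rather than $\tfrac{s}{2}(\cdot)$) are exactly right and correctly explained. The one place where your sketch is thin is step~(i)--(ii): you defer the single-shot $I'$ bound entirely to \cite{Ha2}, which is fine as a citation but means your proposal is not self-contained. In particular, the passage from a collision-entropy leftover hashing bound to a relative-entropy bound is not just Pinsker (which goes the wrong way); it requires the specific inequality $I'(\rho_{F(A)E})\le \eta(|{\cal Z}|\,2^{-\overline{H}_2^{\downarrow}(A|E)})$ with $\eta(x)=x+x\log(1/x)$ or a similar device from \cite{Ha2}, and then a smoothing/substitution step to trade order~$2$ for order~$1+s$. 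If you want a genuinely independent proof rather than a citation, that inequality is the missing ingredient you would need to supply.
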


 \section{Proofs of \eqref{r1} and Theorem \ref{TH7}}\Label{AB1}
 \begin{proofof}{\eqref{r1}}
Let 
$\sM_{F_n}=
(U_{\rmi,n},\sM_{\rmc,n},F_n)$ be a sequence of incoherent strategies that satisfy 
$d_1(\sM_{F_n} |F_n ) \to 0$, that is, 
\begin{equation}
\rE_{F_n}\bigl\|\sM_{F_n}(\tilde{\rho}^{\otimes n})- \tau_{|\sM_{F_n}|}\otimes \sM_{F_n}
(\tilde{\rho}^{\otimes n})_E\bigr\|_1 \to 0.
\end{equation}
Then 
\begin{align}
\frac{1}{n}
\bigl|H(A|E)_{\sM_{F_n}(\tilde{\rho}^{\otimes n})}
-H(A|E)_{ \tau_{|\sM_{F_n}|}\otimes \sM_{F_n}(\tilde{\rho}^{\otimes n})_E} \bigr|
\to 0
\end{align}
according to Fannes inequality for the conditional entropy \cite[Exercise 5.38]{Haya17book}\cite{Alicki}. 
Since
$\tau_{|\sM_{F_n}|}$ is the completely mixed state on the $|\sM_{F_n}|$-dimensional system,
we have $ H(A|E)_{ \tau_{|\sM_{F_n}|}\otimes \sM_{F_n}(\tilde{\rho}^{\otimes n})_E}= \log |\sM_{F_n}|$,
which implies that
\begin{align}
\liminf_{n\to \infty}
\rE_{F_n}\frac{1}{n} H(A|E)_{\sM_{F_n}(\tilde{\rho}^{\otimes n})} =
\liminf_{n\to \infty}
\frac{1}{n}\log |\sM_{F_n}|.
\end{align}
Now \eqref{r1} is a consequence of  the following equation
\begin{align}
&\rE_{F_n} H(A|E)_{\sM_{F_n}(\tilde{\rho}^{\otimes n})} 
\le 
H(A|E)_{\sM_{\rmc,n}
(U_{\rmi,n}(\tilde{\rho}^{\otimes n}\otimes  |0\rangle \langle 0|)U_{\rmi,n}^\dagger)} \nonumber \\
&=H(A|E)_{U_{\mrm{CNOT}} U_{\rmi,n}(\tilde{\rho}^{\otimes n}\otimes  |0\rangle \langle 0|)U_{\rmi,n}^\dagger U_{\mrm{CNOT}}^\dagger} \nonumber \\
&\le
\max_{U_{\rmi}}
H(A|E)_{\tilde{\rho}^{\otimes n}[U_{\rmi}] } \Label{5-2-1}.
\end{align}
\end{proofof}

\begin{proofof}{Theorem \ref{TH7}}
If $R < \min_{\sM_E} H(A|E)_{\sM_\rmc(\sM_E(\tilde{\rho}))}$,
then Alice can extract uniform random numbers using the method described  in Appendix \ref{A2A}, and 
Proposition \ref{LH1} there guarantees that 
the extracted random numbers are secure. 
Therefore,
\begin{align}
\overline{R}(\tilde{\rho}_A) \ge &
\min_{\sM_E}H(A|E)_{\sM_\rmc(\sM_E(\tilde{\rho}))} \nonumber \\
\stackrel{(a)}{=}&
\min_{\{p_j,|\psi_j\rangle\}}
\sum_j p_j 
C_{\rmr}( |\psi_j\rangle \langle \psi_j|)
=
C_{\rmf}(\rho_A),\Label{E4-21B}
\end{align}
where $(a)$ follows from the fact that any decomposition of $\tilde{\rho}_A$ can be induced by a suitable POVM on $\cH_E$.

Let 
$\sM_{F_n}=
(U_{\rmi,n},\sM_{\rmc,n},F_n)$ be a sequence of incoherent strategies whose extraction rate is $R$ and which satisfies $\underline{d}_1(\sM_{F_n}|F_n)
\to 0$.
Since 
$\rE_{F_n} d_1(\sM_{F_n}(\sM_E (\tilde{\rho})^{\otimes n})) \to 0$
for any  local measurement $\sM_E$ of Eve,
using the same argument that leads to \eqref{5-2-1}, we can show the inequality 
\begin{align}
R =&
\liminf_{n\to \infty}
\rE_{F_n}\frac{1}{n} H(A|E)_{\sM_{F_n}(\sM_E(\tilde{\rho})^{\otimes n})} \nonumber \\
\le &
\max_{U_{\rmi}}H(A|E)_{U_{\rmi}( \sM_E(\tilde{\rho}) \otimes |0\rangle \langle 0|) U_{\rmi}^\dagger } .
\end{align}
Taking the minimum over $\sM_E$,
we have 
\begin{align}
R \le&
\min_{\sM_E} 
\max_{U_{\rmi}}H(A|E)_{U_{\rmi} (\sM_E(\tilde{\rho}) \otimes |0\rangle \langle 0|) U_{\rmi}^\dagger } \nonumber \\
=&\min_{\{p_j,|\psi_j\>\}} \sum_j p_j 
C_{\rmr}( |\psi_j\rangle \langle \psi_j|)
= C_{\rmf}(\rho_A),
\end{align}
which yields the opposite inequality to \eqref{E4-21B}.
\end{proofof}

\section{Possibility of extension of Theorem \ref{TH7}}
Since the coherence of formation is additive, that is, $ C_{\rmf}(\rho^{\otimes n})=n C_{\rmf}(\rho)$ \cite{WintY16},
one might expect a further extension of Theorem \ref{TH7}.
That is, one might speculate that the relation $ \overline{R}(\tilde{\rho}_A)=C_{\rmf}(\tilde{\rho}_A)$ 
holds even when the condition
$\underline{d}_1(\sM_{F_n}|F_n) \to 0$
is replaced by the stronger condition
$\max_{\sM_{E,n}} \rE_{F_n} d_1(\sM_{F_n}(\sM_{E,n} (\tilde{\rho}^{\otimes n}))) \to 0 $.
Here, note that 
$\sM_{E,n}$ is a POVM on the $n$-tensor product system; by contrast, in the definition of  $\underline{d}_1(\sM_{F_n}|F_n)$,
Eve's POVMs are restricted to tensor powers of POVMs on individual systems. However,
the additivity of $ C_{\rmf}$ alone  does not imply this stronger statement.

This stronger statement would follow from a stronger condition as described as follows. Given $\alpha >0$,
define the R\'{e}nyi coherence of formation as
\begin{align}
C_{\rmf,1/\alpha}(\rho):=
\min_{\{p_j,|\psi_j\rangle\}}
\frac{1}{1-\alpha}\log 
\sum_j p_j 
2^{(1-\alpha)C_{\rmr,1/\alpha}
( |\psi_j\rangle \langle \psi_j|)},
\end{align}
where $\{p_j,|\psi_j\rangle \}$ satisfies  $ \rho=\sum_j p_j |\psi_j\rangle \langle \psi_j|$. 
As shown later, 
\begin{align}
\lim_{\alpha \to 1}C_{\rmf,\alpha}(\rho)=C_{\rmf}(\rho). \Label{HT667}
\end{align}
In addition, if the classical R\'{e}nyi conditional entropy satisfied the chain rule, i.e.,
\begin{equation}\label{eq:ChainRuleRCE}
H_\alpha^\downarrow(A_1A_2|E)_\rho=H_\alpha^\downarrow(A_1|E)_\rho
+H_\alpha^\downarrow(A_2|A_1 E)_\rho,
\end{equation}
then the R\'{e}nyi coherence of formation would be additive,  
\begin{equation}\label{eq:AddRCF}
C_{\rmf,1/\alpha}(\rho_1 \otimes \rho_2)
= C_{\rmf,1/\alpha}(\rho_1)
+C_{\rmf,1/\alpha}(\rho_2)
\end{equation}
for any pair of  density matrices $\rho_1$  and $\rho_2$ on $A_1$ and $A_2$.
Assuming this additivity relation, we can show the inequality
\begin{align}
 \max_{\sM_{E,n}} \rE_{F_n} d_1\bigl(\sM_{F_n}^*(\sM_{E,n} (\tilde{\rho}^{\otimes n}))\bigr)
\le 
3 \cdot 2^{s n (R- C_{\rmf,{1-s}}(\tilde{\rho}_A )) } \Label{HT666}
\end{align}
for $s \in [0,1/2]$,
where $\sM_{F_n}^*$ denotes the optimal incoherent strategy composed of the computational-basis measurement and the application of the universal 2 hash function $F_n$. In this way,
the combination of  \eqref{HT667} and \eqref{HT666}  implies that
$\max_{\sM_{E,n}} \rE_{F_n} d_1(\sM_{F_n}^*(\sM_{E,n} (\tilde{\rho}^{\otimes n}))) \to 0 $ if $ R < C_{\rmf}(\tilde{\rho}_A)$.

However, it is known that the R\'{e}nyi conditional entropy does not satisfy
the chain rule satisfied by the usual conditional entropy even in the classical case \cite{Dupuis,FB}.
This quantity satisfies only a weaker version of the chain rule \cite[Theorem 1]{Dupuis}\cite[Corollary 87]{Ha4} \cite[Theorem 3]{FB}.
Hence, it is not easy to show 
the relation $ \overline{R}(\tilde{\rho}_A)=C_{\rmf}(\tilde{\rho}_A)$
with the above replacement.

\begin{proofof}{\eqref{HT667}}
For a given $\{p_j, |\psi_j\rangle\}$,
the value $\sum_j p_j 
2^{(1-\alpha)C_{\rmr,1/\alpha}( |\psi_j\rangle \langle |\psi_j|)}$
equals $1$ when $\alpha=1$.
So, the formula of the logarithmic derivative  
$\frac{d}{dx}\log f(x)= \frac{1}{\ln 2} \frac{d f}{dx}(x)/f(x)$
yields that
\begin{align*}
&\lim_{\alpha \to 1}
\frac{1}{1-\alpha}\log 
\sum_j p_j 
2^{(1-\alpha)C_{\rmr,1/\alpha}( |\psi_j\rangle \langle |\psi_j|)} \\
= &
\frac{1}{\ln 2}
\lim_{s \to 0}
\frac{
\sum_j p_j 
2^{s C_{\rmr,1/(1-s)}( |\psi_j\rangle \langle |\psi_j|)} 
-1}{s} \\
= &
\sum_j p_j 
\lim_{s \to 0}
\frac{
s C_{\rmr,1/(1-s)}( |\psi_j\rangle \langle |\psi_j|)}{s} \\
= &
\sum_j p_j 
C_{\rmr}( |\psi_j\rangle \langle |\psi_j|),
\end{align*}
which implies \eqref{HT667}.
\end{proofof}

\noindent{\it Derivation of \eqref{eq:AddRCF} assuming the chain rule \eqref{eq:ChainRuleRCE}:\;}When $\rho$ is pure,
according to \cite{Chitambar,ZHC}, we have
\begin{align}
C_{\rmf,1/\alpha}(\rho)=C_{\rmr,1/\alpha}(\rho)
= S_\alpha(\rho^{\diag})=S_\alpha(\sM_\rmc(\rho)),
\end{align}
where 
\begin{align}
S_{\alpha}(\rho)
= \frac{1}{1-\alpha} \log \tr(\rho^\alpha)
\end{align}
is the R\'enyi $\alpha$-entropy. 
So  the R\'{e}nyi coherence of formation can be expressed as 
\begin{align}
C_{\rmf,1/\alpha}(\rho)
=&
\min_{\{ p_j, \bar{\rho}_j \} }
\frac{1}{1-\alpha}\log 
\sum_j p_j 
2^{(1-\alpha)S_\alpha(\sM_\rmc(\bar{\rho}_j))} \nonumber \\
=&
\min_{\{ p_j, \bar{\rho}_j \} }
H_\alpha^\downarrow(A| J)_{\sM_\rmc (
\sum_{j} p_j \bar{\rho}_j \otimes |j \rangle \langle j| )} ,\Label{6-11}
\end{align}
where $\{p_j, \bar{\rho}_j \}$ satisfies  $ \rho=\sum_j p_j  \bar{\rho}_j$, and  $J$ denotes the  classical system of the register.
The expression \eqref{6-11} follows from the fact that the minimum is attained when all $\bar{\rho}_j$ are pure.

To prove \eqref{eq:AddRCF}, suppose $\rho_1\otimes \rho_2$ has an optimal pure-state decomposition  $\rho_1 \otimes\rho_2=\sum_j p_j \bar{\rho}_j$ such that
\begin{align}
C_{\rmf,1/\alpha}(\rho_1 \otimes \rho_2) 
=
\frac{1}{1-\alpha}\log 
\sum_j p_j 
2^{(1-\alpha)C_{\rmr,1/\alpha}
	( \bar{\rho}_j)}. 
\end{align}
Let $\sigma:= \sum_j  p_j \bar{\rho}_j\otimes |j\rangle \langle j|$; then 
\begin{align}
C_{\rmf,1/\alpha}(\rho_1 \otimes \rho_2) 
=
H_\alpha^\downarrow(A_1A_2|J)_{\sM_{\rmc,1}\otimes \sM_{\rmc,2}(\sigma)} ,
\end{align}
where 
$\sM_{\rmc,1}$ and  $\sM_{\rmc,2}$
express the computational-basis measurements on $A_1$ and  $A_2$, respectively.
Now, the  chain rule \eqref{eq:ChainRuleRCE} implies that 
\begin{align}
&C_{\rmf,1/\alpha}(\rho_1 \otimes \rho_2) 
=
H_\alpha^\downarrow(A_1A_2|J)_{\sM_{\rmc,1}\otimes \sM_{\rmc,2}(\sigma)} \nonumber\\
=&
H_\alpha^\downarrow(A_1|J)_{\sM_{\rmc,1}\otimes \sM_{\rmc,2}(\sigma)}
+
H_\alpha^\downarrow(A_2|A_1 J)_{\sM_{\rmc,1}\otimes \sM_{\rmc,2}(\sigma)} \nonumber\\
\ge & 
C_{\rmf,1/\alpha}(\rho_1 )+
C_{\rmf,1/\alpha}(\rho_2).
\end{align}
Since the opposite inequality
\begin{align}
&C_{\rmf,1/\alpha}(\rho_1 \otimes \rho_2) 
\le
C_{\rmf,1/\alpha}(\rho_1 )+
C_{\rmf,1/\alpha}(\rho_2)\Label{HTT}
\end{align}
is an easy consequence of the definition, we deduce \eqref{eq:AddRCF}, assuming that  the chain rule \eqref{eq:ChainRuleRCE} holds.

\noindent{\it Derivation of \eqref{HT666} assuming the additivity relation \eqref{eq:AddRCF}:\;}
When $\rho$ is a diagonal density matrix,
the paper \cite[Proposition 21]{epsilon} showed that
\begin{align}
\Delta_{d,2}(2^{R}|\rho)
\le
3 \cdot 2^{s R-s H_{\frac{1}{1-s}}^{\uparrow}(A|E)_\rho }
\le
3 \cdot 2^{s R-s H_{\frac{1}{1-s}}^{\downarrow}(A|E)_\rho }
\Label{3-26-1}
\end{align}
for $s \in [0,1/2]$.
Therefore,
\begin{align*}
& \max_{\sM_{E,n}} \rE_{F_n} d_1\bigl(\sM_{F_n}^*(\sM_{E,n} (\tilde{\rho}^{\otimes n}))\bigr)
\\
\stackrel{(a)}{\le}&
3\cdot 2^{s n R- s C_{\rmf,{1-s}}(\tilde{\rho}_A^{\otimes n} ) } 
\stackrel{(b)}{=}
3\cdot 2^{s n (R- C_{\rmf,{1-s}}(\tilde{\rho}_A )) }
\end{align*}
if $s \in [0,1/2]$ and $F_n$ is universal  2 hash. 
Here
$(a)$ follows from the combination of
\eqref{HE1}, \eqref{6-11}, and \eqref{3-26-1}, while $(b)$ follows from the additivity of the R\'{e}nyi coherence of formation.
\endproof

\section{Generalization of Theorem~\ref{TH1}}\Label{AK}
Before proving \thref{TH3X}, which characterizes the exponential decreasing rate of the leaked information, we need to 
generalize Theorem~\ref{TH1} in terms of R\'enyi conditional entropies and R\'enyi relative  entropies of coherence. 

The two types of R\'{e}nyi relative entropies defined in \eqref{eq:RREa} and \eqref{eq:RREb}  can be used to define two types of coherence measures \cite{Chitambar,ZHC},
\begin{align}
C_{\rmr,\alpha}(\rho):=\min_{\sigma\in \mathcal{I}}S_\alpha(\rho\|\sigma),\;\;
\underline{C}_{\rmr,\alpha}(\rho):=\min_{\sigma\in \mathcal{I}}\underline{S}_\alpha(\rho\|\sigma),\label{eq:H2616}
\end{align}
both of which increase monotonically with $\alpha$. The following theorem generalizes Theorem~\ref{TH1} and thereby demonstrates the significance of these  R\'enyi relative  entropies of coherence.
\begin{theorem}\Label{TH2}
	\begin{align}
&	\frac{1}{n}\max_{\Lambda_\rmi} 
	\overline{H}_\alpha^{\uparrow}(A|E)_{\tilde{\rho}^{\otimes n}[\Lambda_\rmi]}
	= 
	\max_{\Lambda_\rmi} 
	\overline{H}_\alpha^{\uparrow}(A|E)_{\tilde{\rho}[\Lambda_\rmi]}\nonumber \\
	= &
	\max_{U_\rmi} 
	\overline{H}_\alpha^{\uparrow}(A|E)_{\tilde{\rho}[U_\rmi]}
    =
	\overline{H}_\alpha^{\uparrow}(A|E)_{\tilde{\rho}[U_{\mrm{CNOT}}]}
	= 
	\underline{C}_{\rmr,\beta}(\tilde{\rho}_A),
	\Label{eq:H2610} \\
&	\frac{1}{n}\max_{\Lambda_\rmi} 
	\overline{H}_\alpha^{\downarrow}(A|E)_{\tilde{\rho}^{\otimes n}[\Lambda_\rmi]}
	= 
	\max_{\Lambda_\rmi} 
	\overline{H}_\alpha^{\downarrow}(A|E)_{\tilde{\rho}[\Lambda_\rmi]}\nonumber \\
	= &
	\max_{U_\rmi} 
	\overline{H}_\alpha^{\downarrow}(A|E)_{\tilde{\rho}[U_\rmi]}
	=
	\overline{H}_\alpha^{\downarrow}(A|E)_{\tilde{\rho}[U_{\mrm{CNOT}}]}
	= C_{\rmr,\beta}(\tilde{\rho}_A),
	\Label{eq:H2611}
	\end{align}
	where $U_\rmi$ is an incoherent unitary,  $\Lambda_\rmi$ is an incoherence-preserving operation,
	\eqref{eq:H2610} holds for  $\alpha,\beta \in[\frac{1}{2},\infty]$ with $ \frac{1}{\alpha}+\frac{1}{\beta}=2$, 
	while
	\eqref{eq:H2611} holds for
	$\alpha \in [\frac{1}{2},\infty]$ and
	$\beta \in [0,2]$ with  $ \alpha\beta=1$.
\end{theorem}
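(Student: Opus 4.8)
The plan is to reproduce, at the level of R\'enyi quantities, the three-part argument behind \thref{TH1}: (i) reduce to a pure global state and dualise; (ii) bound the resulting $B$-conditional R\'enyi entropy by a R\'enyi relative entropy of coherence of $\tilde{\rho}_A$; (iii) show the bound is saturated by $U_{\mrm{CNOT}}$ and promote everything to the $n$-copy statement by additivity. For (i), exactly as in the extension of \thref{TH1} to incoherence-preserving maps, I would pass to a Stinespring dilation $U[\Lambda_\rmi]$ of $\Lambda_\rmi$ and group the dilating environment $\cH_{E'}$ with $\cH_E$, so that $\tilde{\rho}[\Lambda_\rmi]$ is pure on $\cH_A\otimes\cH_B\otimes\cH_E\otimes\cH_{E'}$ and its $AB$-marginal is $\Lambda_\rmi(\tilde{\rho}_A\otimes|0\rangle\langle0|)$. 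On a pure tripartite state one has the duality relations $\overline{H}_\alpha^{\uparrow}(A|E)_{\rho}=-\overline{H}_\beta^{\uparrow}(A|B)_{\rho}$ when $\frac1\alpha+\frac1\beta=2$ (for the sandwiched optimised conditional entropy) and $\overline{H}_\alpha^{\downarrow}(A|E)_{\rho}=-{H}_\beta^{\uparrow}(A|B)_{\rho}$ when $\alpha\beta=1$ (relating the sandwiched fixed-marginal entropy to the Petz optimised one); these are precisely the two parameter pairings in \eqref{eq:H2610} and \eqref{eq:H2611}, and they also pin down the stated ranges $\alpha\in[\tfrac12,\infty]$, $\beta\in[\tfrac12,\infty]$ resp.\ $\beta\in[0,2]$.

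For (ii), write $\sigma_{AB}:=\tilde{\rho}[\Lambda_\rmi]_{AB}=\Lambda_\rmi(\tilde{\rho}_A\otimes|0\rangle\langle0|)$. The chain runs parallel to \eqref{eq:H271}: first $-\overline{H}_\beta^{\uparrow}(A|B)_{\sigma_{AB}}\le \widetilde{E}_{\rmr,\beta}(\sigma_{AB})$ (resp.\ $-{H}_\beta^{\uparrow}(A|B)_{\sigma_{AB}}\le E_{\rmr,\beta}(\sigma_{AB})$), the R\'enyi analogue of the bound $-H(A|B)\le E_\rmr$ from \cite{PlenVP00}\cite[Lemma~4]{ZHC}, where $\widetilde{E}_{\rmr,\beta}$ and $E_{\rmr,\beta}$ denote the relative entropies of entanglement built respectively from $\underline{S}_\beta$ and $S_\beta$; then $\widetilde{E}_{\rmr,\beta}(\sigma_{AB})\le\underline{C}_{\rmr,\beta}(\sigma_{AB})$ (resp.\ $E_{\rmr,\beta}(\sigma_{AB})\le C_{\rmr,\beta}(\sigma_{AB})$) because incoherent states of a bipartite system are separable; and finally $\underline{C}_{\rmr,\beta}(\sigma_{AB})\le\underline{C}_{\rmr,\beta}(\tilde{\rho}_A\otimes|0\rangle\langle0|)=\underline{C}_{\rmr,\beta}(\tilde{\rho}_A)$ (resp.\ with $C_{\rmr,\beta}$), using monotonicity of the R\'enyi relative entropies of coherence under incoherence-preserving maps and invariance under appending an incoherent pure state. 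All of these R\'enyi facts are to be drawn from the companion work \cite{ZHC}. Applying the same chain to $\tilde{\rho}_A^{\otimes n}$ and using additivity $\underline{C}_{\rmr,\beta}(\tilde{\rho}_A^{\otimes n})=n\,\underline{C}_{\rmr,\beta}(\tilde{\rho}_A)$ (resp.\ for $C_{\rmr,\beta}$) \cite{ZHC} yields the ``$\le$'' half of the left-most equality.

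For (iii), note that $\tilde{\rho}[U_{\mrm{CNOT}}]_{AB}$ is maximally correlated \cite{Rain99,ZhuMCF17,ZhuHC17}; for maximally correlated states the sandwiched and Petz R\'enyi relative entropies of coherence and the conditional R\'enyi entropies have explicit closed forms in terms of $\tilde{\rho}_A^{\diag}$, available through the R\'enyi coherence--entanglement correspondence of \cite{StreSDB15,ZHC}. A direct evaluation then gives $\overline{H}_\alpha^{\uparrow}(A|E)_{\tilde{\rho}[U_{\mrm{CNOT}}]}=\underline{C}_{\rmr,\beta}(\tilde{\rho}_A)$ and $\overline{H}_\alpha^{\downarrow}(A|E)_{\tilde{\rho}[U_{\mrm{CNOT}}]}=C_{\rmr,\beta}(\tilde{\rho}_A)$, so every intermediate inequality above is an equality; since $U_{\mrm{CNOT}}$ is an incoherent unitary (hence a physically incoherent operation), the maxima over incoherent unitaries, over incoherence-preserving operations, and over $n$-copy strategies all collapse to the same single-letter value, the remaining ``$\ge$'' half following by evaluating at $U_{\mrm{CNOT}}^{\otimes n}$ together with additivity of the conditional R\'enyi entropies under tensor products and the fact that maximal correlation is preserved under tensor products.

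The step I expect to be the main obstacle is the first inequality in (ii): establishing the R\'enyi version of $-H(A|B)\le E_\rmr$ with exactly the parameter pairing dictated by the duality used, and carefully handling the passage from the optimisation over the (possibly sub-normalised) operators $I_A\otimes\tau_B$ in the optimised conditional entropy to the optimisation over genuine separable, and then incoherent, states; a scaling argument alone is insufficient here, so one must invoke a R\'enyi ``hashing''-type bound from \cite{ZHC}. A secondary point is to verify that the minimiser $\sigma_E$ in $\overline{H}_\alpha^{\uparrow}(A|E)_{\tilde{\rho}[U_{\mrm{CNOT}}]}$ is compatible with these bounds, so that achievability at $U_{\mrm{CNOT}}$ is exact rather than merely asymptotic; the maximally-correlated structure should make this routine.
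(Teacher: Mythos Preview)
Your proposal is correct and follows essentially the same route as the paper: duality \eqref{eq:H267}--\eqref{eq:H268} to pass to the $B$-side, then the chain $-\overline{H}_\beta^{\uparrow}(A|B)\le \underline{E}_{\rmr,\beta}\le \underline{C}_{\rmr,\beta}(\sigma_{AB})\le \underline{C}_{\rmr,\beta}(\tilde{\rho}_A)$ (the paper's \eqref{eq:H271T}), achievability at $U_{\mrm{CNOT}}$ via the maximally correlated structure and \cite[Theorem~1]{ZHC}, and finally additivity from \cite[Theorem~3]{ZHC}. The step you flag as the main obstacle is exactly what the paper imports as \cite[Lemma~4]{ZHC}, so your instinct about where the weight lies is accurate.
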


The proof of Theorem~\ref{TH2} relies on the duality relations between R\'{e}nyi conditional entropies.
When $\rho$ is a pure state across the three systems $\cH_A,\cH_B$, and $\cH_E$,
these conditional entropies obey the following duality relations 
\cite{MullDSF14}\cite{Ha3}\cite[Theorem 5.13]{Haya17book}\cite{C6}:
\begin{align}
\overline{H}_\alpha^{\uparrow}(A|E)_{\rho}+
\overline{H}_\beta^{\uparrow}(A|B)_{\rho} &=0 \Label{eq:H267},\\
\overline{H}_\alpha^{\downarrow}(A|E)_{\rho}
+{H}_\beta^{\uparrow}(A|B)_{\rho}&=0,
\Label{eq:H268}
\end{align}
where 
\eqref{eq:H267} holds for $\alpha,\beta \in[\frac{1}{2},\infty]$  with $ \frac{1}{\alpha}+\frac{1}{\beta}=2$
and  
\eqref{eq:H268} holds for  $\alpha,\beta \in [0,\infty]$  with $ \alpha\beta=1$.

\begin{proofof}{Theorem \ref{TH2}}
Let $\beta=\alpha/(2\alpha-1)$, then $ \frac{1}{\alpha}+\frac{1}{\beta}=2$. Let  $U_\rmi$ be any incoherent unitary acting on $\caH_A\otimes \caH_B$; then 	
	\begin{align}
&\overline{H}_\alpha^{\uparrow}(A|E)_{\tilde{\rho}[U_\rmi]}
 =-\overline{H}_\beta^{\uparrow}(A|B)_{\tilde{\rho}[U_\rmi]}\le  \underline{E}_{\rmr,\beta}(\tilde{\rho}[U_\rmi]_{AB})\nonumber \\
 &\leq \underline{C}_{\rmr,\beta}(\tilde{\rho}[U_\rmi]_{AB})
	\leq \underline{C}_{\rmr,\beta}(\tilde{\rho}_A).
	\Label{eq:H271T}
	\end{align}
Here the equality follows from \eqref{eq:H267}, the	first inequality follows from \cite[Lemma~4]{ZHC}, and the other two inequalities are trivial. According to \cite[Theorem 1]{ZHC}, the upper bound in the RHS of \eqref{eq:H271T} is attained when 
	$U_\rmi$ is the generalized CNOT gate, in which case
	$\tilde{\rho}[U_\rmi]_{AB}$
	is maximally correlated.

By the same reasoning as above, we deduce the equality
	$\max_{\Lambda_\rmi} 
	\overline{H}_\alpha^{\uparrow}(A|E)_{\tilde{\rho}[\Lambda_\rmi]} =
	\underline{C}_{\rmr,\beta}(\tilde{\rho}_A)$, which in turn implies that $\max_{\Lambda_\rmi} 
	\overline{H}_\alpha^{\uparrow}(A|E)_{\tilde{\rho}^{\otimes n}[\Lambda_\rmi]}
	= 
	\underline{C}_{\rmr,\beta}(\tilde{\rho}_A^{\otimes n})$. Now the proof of \eqref{eq:H2610} is completed by the additivity relation 
$	\underline{C}_{\rmr,\beta}(\tilde{\rho}_A^{\otimes n})=	n \underline{C}_{\rmr,\beta}(\tilde{\rho}_A)$, which  is shown in \cite[Theorem~3]{ZHC}.

Finally, \eqref{eq:H2611} can be proved in a similar  way.
\end{proofof}

\section{Proof of Theorem \ref{TH3X}}\Label{ATH3}

\begin{proof}[Proof of \thref{TH3X}]
Applying Proposition \ref{LH1} in Appendix~\ref{A2A}, we deduce that
	\begin{align}
	&\liminf_{n \to \infty}\frac{-1}{n}\log 
	\rE_{F_n} d_1\bigl(\sM_{F_n}^*(\tilde{\rho}^{\otimes n} )\bigr) \nonumber \\
	\ge & \max_{s \in [0,1]}\frac{1}{2} \bigl(s \overline{H}_{1+s}^\uparrow
	(A|E)_{\sM_\rmc(\tilde{\rho})}-sR\bigr).
	\Label{Eq4-20}
	\end{align}
Now  Theorem~\ref{TH3X} is a corollary of  the following equation 
	\begin{align}\overline{H}_{1+s}^\uparrow
	(A|E)_{\sM_\rmc(\tilde{\rho})}=\overline{H}_{1+s}^\uparrow
	(A|E)_{\tilde{\rho}[U_\mrm{CNOT}]}=
	\underline{C}_{\rmr,\frac{1+s}{1+2s}}(\tilde{\rho}_A),
	\end{align}
where the second equality follows from  \eqref{eq:H2610} in \thref{TH2}.
\end{proof}

\section{Security analysis based on an alternative criterion}
\Label{AAT}
Here we analyze the  exponential decreasing rate of the alternative   security measure $I'({\rho}_{AE})$ defined in \eqref{eq:AltSC}, which denotes
the relative entropy between the true state and the ideal state \cite{CN}. 
Similar to Theorem~\ref{TH3X}, we have
\begin{align}
&\liminf_{n \to \infty}\frac{-1}{n}\log 
\rE_{F_n}I'\bigl(\sM_{F_n}^*(\tilde{\rho}^{\otimes n} )\bigr) \nonumber \\
\ge & \max_{s \in [0,1]}\bigl(s 
C_{\rmr,\frac{1}{1+s}}(\tilde{\rho}_A)
-sR\bigr).\Label{E-29-12}
\end{align}
Again, the exponential decreasing rate of the leaked information is controlled by 
R\'{e}nyi relative entropies of coherence.
To prove \eqref{E-29-12}, note that
\begin{align}
&\liminf_{n \to \infty}\frac{-1}{n}\log 
\rE_{F_n} I'\bigl(\sM_{F_n}^*(\tilde{\rho}^{\otimes n} )\bigr) \nonumber \\
\ge & \max_{s \in [0,1]}\bigl(s \overline{H}_{1+s}^\downarrow
(A|E)_{\sM_\rmc(\tilde{\rho})}-sR\bigr)
\end{align}
according to Proposition \ref{LH2} in Appendix \ref{A2A}. Now  \eqref{E-29-12} is a corollary of  the following equation 
\begin{align}\overline{H}_{1+s}^\downarrow
(A|E)_{\sM_\rmc(\tilde{\rho})}=\overline{H}_{1+s}^\downarrow
(A|E)_{\tilde{\rho}[U_\mrm{CNOT}]}=
C_{\rmr,\frac{1}{1+s}}(\tilde{\rho}_A),
\end{align}
where the second equality follows from  \eqref{eq:H2611} in \thref{TH2}.

\section{Proof of Corollary~\ref{cor:HZL}}\Label{AHZL}
In view of Theorems \ref{TH3} and \ref{TH7}, Corollary~\ref{cor:HZL} is an immediate consequence of the following lemma.
\begin{lemma}\Label{cor:HZL2}
	A qubit state $\rho$ saturates the inequality
	$C_{\rmf}(\rho)\geq C_{\rmr}(\rho)$ iff $\rho$ is pure or incoherent.
\end{lemma}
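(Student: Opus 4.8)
The plan is to characterize exactly when the convex-roof extension $C_{\rmf}$ of $C_{\rmr}$ agrees with $C_{\rmr}$ itself for a qubit state, so I will work directly with the definitions and the geometry of the Bloch sphere. Write $\rho = \tfrac12(I + \vec{r}\cdot\vec{\sigma})$ with Bloch vector $\vec r = (x,y,z)$ and $|\vec r| = r \le 1$. The reference basis is $\{|0\rangle,|1\rangle\}$, so $\rho^{\diag}$ corresponds to the Bloch vector $(0,0,z)$, and $C_{\rmr}(\rho) = S(\rho^{\diag}) - S(\rho) = h\!\left(\tfrac{1+|z|}{2}\right) - h\!\left(\tfrac{1+r}{2}\right)$, where $h$ is the binary entropy. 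For a pure state $|\psi\rangle$ with Bloch vector on the sphere, $C_{\rmr}(|\psi\rangle\langle\psi|) = h\!\left(\tfrac{1+|z_\psi|}{2}\right)$ since the von Neumann entropy vanishes.

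First I would dispose of the two easy directions. If $\rho$ is pure then every decomposition is trivial and $C_{\rmf}(\rho) = C_{\rmr}(\rho)$ automatically. If $\rho$ is incoherent then $\rho = \rho^{\diag}$ is diagonal, hence $\rho = p|0\rangle\langle0| + (1-p)|1\rangle\langle1|$ is itself a decomposition into incoherent (hence zero-coherence) pure states, giving $C_{\rmf}(\rho) = 0 = C_{\rmr}(\rho)$. The substance is the converse: I must show that if $\rho$ is neither pure nor incoherent, then $C_{\rmf}(\rho) > C_{\rmr}(\rho)$ strictly. The idea is that $C_{\rmf}(\rho)$ is the minimum of $\sum_j p_j\, h\!\left(\tfrac{1+|z_j|}{2}\right)$ over all pure-state decompositions $\rho = \sum_j p_j |\psi_j\rangle\langle\psi_j|$, where $z_j$ is the third Bloch coordinate of $|\psi_j\rangle$. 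Since all $|\psi_j\rangle$ lie on the unit sphere while $\rho$ lies strictly inside (as $\rho$ is not pure), and since the average of the Bloch vectors must equal $\vec r$, the $|z_j|$ cannot all be concentrated at the value forced by optimality. Concretely, I would let $g(t) := h\!\left(\tfrac{1+\sqrt{t}}{2}\right)$ as a function of $t = $ (squared length of the $z$-component... actually better: let $g$ be $h\!\left(\tfrac{1+|z|}{2}\right)$ viewed as a function of $z^2$) and use that $g$ is \emph{strictly concave} in $z^2$ on $[0,1]$. Then for any pure decomposition, $\sum_j p_j g(z_j^2) \le g\!\left(\sum_j p_j z_j^2\right)$, with equality iff all $z_j^2$ are equal. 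The standard computation (this is precisely Theorem~10 of \cite{WintY16} specialized to a qubit, or can be done by hand here) shows the optimal decomposition for a qubit is the symmetric one: two pure states with $z$-coordinates $\pm z_0$ for a suitable $z_0 \ge |z|$, and a short convexity argument gives $C_{\rmf}(\rho) = h\!\left(\tfrac{1+z_0}{2}\right)$ where $z_0$ is determined by the constraint that $\pm z_0$, together with appropriate transverse components, average to $\vec r$; one finds $z_0 = \sqrt{z^2 + (1-r^2)\,\tfrac{z^2}{\,?\,}}$ — rather than chase the exact formula I would simply argue $z_0 > |z|$ strictly whenever $r < 1$ and $(x,y) \ne (0,0)$, i.e.\ whenever $\rho$ is neither pure nor incoherent, because a nonzero transverse component in $\vec r$ forces the pure states in any decomposition to have transverse components that do not all cancel unless they are tilted, which pushes $|z_j|$ up.

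So the key steps in order are: (1) set up Bloch coordinates and record the closed forms for $C_{\rmr}(\rho)$ and $C_{\rmr}$ on pure states; (2) handle the ``if'' direction (pure or incoherent $\Rightarrow$ equality) directly from definitions; (3) for the ``only if'' direction, invoke the explicit qubit formula for $C_{\rmf}$ — either citing \cite{Aber06,YZCM} where it is computed, or rederiving it via strict concavity of $z \mapsto h\!\left(\tfrac{1+|z|}{2}\right)$ in $z^2$ together with the symmetry reduction to a two-element decomposition — to get $C_{\rmf}(\rho) = h\!\left(\tfrac{1+z_0(\rho)}{2}\right)$ for an explicit $z_0(\rho) \in [|z|, r]$; (4) compare: $C_{\rmf}(\rho) = C_{\rmr}(\rho)$ forces $h\!\left(\tfrac{1+z_0}{2}\right) = h\!\left(\tfrac{1+|z|}{2}\right) - h\!\left(\tfrac{1+r}{2}\right)$, and analyzing this equality (using monotonicity of $h$ on $[1/2,1]$ and the strict concavity already invoked) shows it can hold only if $r = |z|$ (so $\rho$ is incoherent) or $r = 1$ (so $\rho$ is pure). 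I expect the main obstacle to be step (3)/(4): pinning down $z_0(\rho)$ precisely enough — or finding the right one-line convexity argument that bypasses it — so that the equality case analysis is clean. The cleanest route is probably to avoid the explicit $z_0$ entirely: show that for \emph{any} pure decomposition $\rho = \sum_j p_j|\psi_j\rangle\langle\psi_j|$ one has $\sum_j p_j z_j^2 \ge z^2 + (1-r^2)$ whenever... — more carefully, $\sum_j p_j (z_j^2 + \text{transverse}^2) = \sum_j p_j \cdot 1 = 1$ while $z^2 + |\vec r_\perp|^2 = r^2 < 1$ plus Jensen on the transverse part forces $\sum_j p_j z_j^2 \ge 1 - |\vec r_\perp|^2 > z^2$ when $\vec r_\perp \ne 0$, and then strict concavity of $g$ gives $C_{\rmf}(\rho) \le$ ... wait, that goes the wrong way — so the argument must instead lower-bound $\sum_j p_j g(z_j^2)$ by noting $g$ is increasing and $\sum_j p_j z_j^2 > z^2$ is not by itself enough; one needs the full comparison $\sum_j p_j g(z_j^2) \ge g(\sum p_j z_j^2) \ge$ ... hmm, concavity gives an upper bound on the sum, not a lower one. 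The resolution is that $C_{\rmf}$ minimizes the sum, and the obstruction to reaching $C_{\rmr}$ comes from comparing with the \emph{diagonal} decomposition evaluated differently; I would therefore use the known result that the qubit $C_{\rmf}$ equals $h$ of the larger semiaxis of the ellipse obtained by intersecting the Bloch ball with the plane through $\vec r$ parallel to the equator — this is the honest hard kernel, and I would cite it rather than re-prove it, reducing (3)–(4) to a direct, finite comparison of two explicit functions of $(r,|z|)$ whose equality set is exactly $\{r = |z|\} \cup \{r = 1\}$.
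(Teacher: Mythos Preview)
Your overall strategy matches the paper's: parametrize by Bloch coordinates, cite the explicit qubit formula $C_{\rmf}(\rho)=h\bigl(\tfrac{1+\sqrt{1-x^2-y^2}}{2}\bigr)$ from \cite{YZCM}, and compare it to $C_{\rmr}(\rho)=h\bigl(\tfrac{1+|z|}{2}\bigr)-h\bigl(\tfrac{1+r}{2}\bigr)$. The ``if'' direction is fine.

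The gap is in your step (4). You correctly notice midway through that the concavity arguments you try for $C_{\rmf}$ point the wrong way, and you fall back on ``a direct, finite comparison of two explicit functions of $(r,|z|)$.'' But that two-variable comparison is exactly the nontrivial content of the lemma, and you give no method for carrying it out. Writing $C_{\rmf}-C_{\rmr}$ as a function of $(r,|z|)$ and showing it vanishes only on $\{r=|z|\}\cup\{r=1\}$ is not a one-liner; monotonicity and concavity of $h$ alone do not settle it, since $C_{\rmf}\le h\bigl(\tfrac{1+|z|}{2}\bigr)$ and you need a \emph{lower} bound on $C_{\rmf}$ relative to $C_{\rmr}$.

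The paper's device for this step is a dimension reduction that you are missing: first handle the equatorial slice $z=0$ (so $C_{\rmr}=1-h\bigl(\tfrac{1+|x|}{2}\bigr)$ and $C_{\rmf}=h\bigl(\tfrac{1+\sqrt{1-x^2}}{2}\bigr)$), where a one-variable derivative computation shows strict inequality for $0<|x|<1$. Then for a general mixed coherent $\rho$ with $y=0$, $0<x<1$, $z\ge 0$, write $\rho=p_1\rho_1+p_2\rho_2$ with $\rho_1$ on the equator (Bloch vector $(x,0,0)$) and $\rho_2$ pure (Bloch vector $(x,0,\sqrt{1-x^2})$); observe $C_{\rmf}(\rho)=C_{\rmf}(\rho_1)=C_{\rmf}(\rho_2)=C_{\rmr}(\rho_2)$ while $C_{\rmf}(\rho_1)>C_{\rmr}(\rho_1)$ from the equatorial case, and then convexity of $C_{\rmr}$ (not of $C_{\rmf}$) gives $C_{\rmr}(\rho)\le p_1 C_{\rmr}(\rho_1)+p_2 C_{\rmr}(\rho_2)<C_{\rmf}(\rho)$. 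Symmetry in $(x,y)$ and $z\mapsto -z$ finishes. This is the missing idea that makes the comparison clean.
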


\begin{proof}
	The inequality $C_{\rmf}\geq C_{\rmr}$ holds in general because $C_{\rmf}$ is the convex roof of $C_{\rmr}$. 
	
Any qubit state can be written as follows,
	\begin{equation}\Label{eq:rhoQubit}
	\rho=\frac{1}{2}(I+x\sigma_x+y\sigma_y+z\sigma_z),\quad x^2+y^2+z^2\leq 1.
	\end{equation}
	Let $r=\sqrt{x^2+y^2+z^2}$, then 
	\begin{align}
	C_{\rmr}(\rho)&=H\Bigl(\frac{1+z}{2}\Bigr)-H\Bigl(\frac{1+r}{2}\Bigr),\\
	C_{\rmf}(\rho)&=H\Bigl(\frac{1+\sqrt{1-x^2-y^2}}{2}\Bigr),
	\end{align}
	where $H(p)=-p\log p-(1-p)\log(1-p)$, and the formula for $C_{\rmf}(\rho)$ was derived in \rcite{YZCM}. 
The relation between $C_{\rmr}$ and $C_{\rmf}$ was  illustrated in Fig. 3 of \cite{YZCM2}.

If $\rho$ is pure, then $x^2+y^2+z^2=1$,  so that $C_{\rmf}(\rho)=C_{\rmr}(\rho)=H\bigl(\frac{1+z}{2}\bigr)$. If $\rho$ is incoherent, then $x=y=0$, so that $C_{\rmf}=C_{\rmr}=0$.

To determine the condition for saturating the inequality  $C_{\rmf}\geq C_{\rmr}$, first consider the case $y=z=0$, so that $C_{\rmr}(\rho)=1-H\bigl(\frac{1+x}{2}\bigr)$ and $C_{\rmf}(\rho)=H\bigl(\frac{1+\sqrt{1-x^2}}{2}\bigr)$.
	By computing the first and second derivatives of $C_{\rmf}(\rho)-C_{\rmr}(\rho)$ with $x$, it is not difficult to prove that $C_{\rmf}(\rho)=C_{\rmr}(\rho)$ iff $x=0$ or $x=\pm1$.

Next, consider the case $0<x<1$, $y=0$, $z\geq 0$, and $x^2+z^2<1$. Let $\rho_1, \rho_2$ be two qubit states with Bloch vectors $(x, 0,0)$ and $(x,0,\sqrt{1-x^2})$, respectively. Then $\rho$ is a convex combination of $\rho_1$ and $\rho_2$, that is, $\rho=p_1\rho_1+p_2\rho_2$ with $p_1>0$. 
	In addition,
	\begin{align}
	C_{\rmf}(\rho)=C_{\rmf}(\rho_1)=C_{\rmf}(\rho_2)=C_{\rmr}(\rho_2),\quad
	C_{\rmf}(\rho_1)>C_{\rmr}(\rho_1). 
	\end{align} 
	Given that $C_{\rmr}$ is convex, we conclude that 
	\begin{align}
	C_{\rmr}(\rho)&\leq p_1C_{\rmr}(\rho_1)+p_2 C_{\rmr}(\rho_2)< p_1C_{\rmf}(\rho_1)+p_2C_{\rmf}(\rho_2)\nonumber\\
	&=C_{\rmf}(\rho). 
	\end{align}
	
	By symmetry $C_{\rmr}(\rho)<C_{\rmf}(\rho)$ whenever $x^2+y^2+z^2<1$ and $x^2+y^2>0$. Therefore, the inequality $C_{\rmf}(\rho)\geq C_{\rmr}(\rho)$ is saturated iff the qubit state $\rho$ is  pure or incoherent.
\end{proof}

\section{Application to  quantum random number generators}\Label{industrial}

In this appendix, we explain the application of our study to the design of a quantum random number generator.
Remember that our optimal incoherent strategy can be realized by the measurement $\sM_{\rmc}$ in
the computational basis  followed by classical data processing.
A quantum random number generator consists of the following ingredients:  an internal quantum system, the device that performs the computational-basis measurement,
and the data processor that  extracts secure uniform random numbers. To  make a quantum random number generator as an industrial product, the supplier
needs to specify the method for  preparing the state of the internal system, which can  be identified by quantum state tomography \cite{Hol,Gill,HayaB}\cite[Chapter 6]{Haya17book}.

To implement quantum state tomography, the supplier can  apply suitable  measurements to the quantum system and reconstruct the quantum state based on the measurement statistics. Since quantum measurements are  destructive, to achieve sufficient precision in this procedure,  usually many identically-prepared quantum states are needed to gather enough information. In addition,  quantum state  tomography may require operations that are not incoherent, but this is not a problem. Note that in the design stage of the random number generator, it is reasonable to assume that the supplier can access certain advanced equipments and are not restricted to  incoherent operations, in contrast with the user stage of the device. 


Once the internal state $\tilde{\rho}_A$ of the random number generator is determined,
the supplier can choose the parameter $n$ and the extraction rate $R$ 
based on the upper bound determined by \eqref{HE1} and \eqref{HE2}, so that the amount of leaked information $d_1(\sM_{F_n}^* |F_n)$ is less than a given threshold.  
Note that the combination of \eqref{HE1} and \eqref{HE2} allows to perform a finite-length analysis.
Since $d_1(\sM_{F_n}^* |F_n)$ decreases exponentially with $n$, this task can be achieved with a suitable choice of the parameters as long as  $\tilde{\rho}_A$ is sufficiently coherent. 
In addition, the supplier needs to design universal 2 hash functions so as to perform randomness extraction. Although random numbers are needed to apply random hash functions, the security is not compromised even if Eve knows which specific hash function is applied each time. 
Therefore, the supplier needs to design universal 2  hash functions only once, which can then be reused repeatedly. 
The user does not have to invest random numbers to operate the random number generator.

So far we have assumed that the measurement device for estimating the quantum state of the internal system is trustworthy. 
This assumption is not absolutely necessary. 
Even when the measurement device cannot be trusted, 
the supplier can identify the measurement device by applying the method of self testing
\cite{MY2,McKague1,HH}.
The self testing was originally proposed using the CHSH test, which requires the preparation of a Bell state \cite{MY2,McKague1}.
Recently, the paper \cite{HH} improved it by proposing a hybrid method of the CHSH test and the Bell state test, i.e., the stabilizer test. Applying this method before quantum state tomography, 
the supplier can identify the measurement device so that the quantum state of the internal system can be guaranteed. 
Similarly, the measurement device for generating  random numbers may not be trustworthy.
In that case, the supplier can apply the self testing to this measurement device.
In this way, the supplier can guarantee the security of the random numbers generated
by the random number generator. 

\bigskip

\section{Relation with \cite{YZCM}}\Label{Relation}
Here, we need to discuss the relation with the paper~\cite{YZCM}, which studied a related but different problem. The focus of the current paper is the extraction of uniform random numbers by incoherent strategies, which include the measurement on the computational basis and general incoherent operations (or incoherence-preserving operations). The focus of \cite{YZCM} is the connection between intrinsic randomness and coherence measures.
In information theory, the term ``intrinsic randomness'' usually  means the extraction of uniform random numbers \cite{Ha5,Han}. In \cite{YZCM},  the term has a related but different meaning, that is, the randomness of measurement outcomes conditioned on Eve's prediction. With this latter interpretation,
\cite{YZCM} showed that the intrinsic randomness of measurement outcomes with respect to the  computational basis is equal  to the coherence of formation, without discussing general protocols for extracting uniform random numbers.

\end{document}